\renewcommand{\paragraph}[1]{\vspace{1mm}\noindent{\bf #1}}
\newtheorem{problem}{Problem}[section]
\newcommand{\miset}{MISet}
\newcommand{\maset}{MASet}
\newcommand{\seg}{{\bfseries{\tt sg}}}
\newcommand{\add}{{\bfseries{\tt ad}}}
\newcommand{\phn}{{\bfseries{\tt pn}}}
\newcommand{\jn}{{\bfseries{\tt jn}}}
\newcommand{\dup}{{\bfseries{\tt dp}}}
\newcommand{\sr}{{\bfseries{\tt sc}}}
\newcommand{\nm}{{\bfseries{\tt nm}}}
\newcommand{\wb}{{\bfseries{\tt wb}}}
\newcommand{\grush}{{\bfseries{\tt Bussiness}}}
\newcommand{\wpr}{{\bfseries{\tt Iterative}}}
\newcommand{\rbox}{\hfill $\Box$}
\newcommand{\bussi}{{\tt footwear}}
\newcommand{\actors}{{\tt actors}}
\newcommand{\books}{{\tt books}}
\newcommand{\senparty}{{\tt sen-party}}
\newcommand{\senstate}{{\tt sen-state}}
\newcommand{\mayor}{{\tt mayor}}
\newcommand{\alldocs}{{\tt All-recs}}
\newcommand{\wordor}{{\tt Wrd-OR}}
\newcommand{\wordand}{{\tt Wrd-AND}}
\newcommand{\op}{O}
\def\Rec{R}
\newcommand{\manyone}{{\tt many-to-one}}
\newcommand{\onemany}{{\tt one-to-many}}
\newcommand{\oneone}{{\tt one-to-one}}
\newcommand{\monotonic}{{\tt monotonic}}
\newcommand{\arbitrary}{{\tt arbitrary}}
\newcommand{\manymany}{{\tt many-to-many}}
\newcommand{\Pall}{P_{\mathit all}}
\newcommand{\Pimp}{P_{\mathit imp}}
\newcommand{\Pany}{P_{\mathit any}}
\newcommand{\Punion}{P_{\mathit uni}}
\newcommand{\Pint}{P_{\mathit int}}
\newcommand{\prober}{PROBER}
\newcommand{\Mall}{M_{\mathit all}}
\def\P{{P}}
\newcommand{\squishlist}{
   \begin{list}{$\bullet$}
    {
      \setlength{\itemsep}{1pt}
      \setlength{\parsep}{2pt}
      \setlength{\topsep}{2pt}
      \setlength{\partopsep}{0pt} } }
\newcommand{\squishend}{
    \end{list}  }
\title{PROBER: Ad-Hoc Debugging of Extraction and Integration Pipelines}
\author{\alignauthor
Anish Das Sarma\\
       \affaddr{Yahoo, CA, USA}\\
       \email{anishdas@yahoo{\tt-}inc.com}       
\alignauthor
Alpa Jain\\
       \affaddr{Yahoo, CA, USA}\\
       \email{alpa@yahoo-inc.com}
\alignauthor
Philip Bohannon\\
       \affaddr{Yahoo, CA, USA}\\
       \email{plb@yahoo-inc.com}
}
\begin{document}
\maketitle

\begin{abstract}
Complex information extraction (IE) pipelines assembled by plumbing together off-the-shelf operators, specially customized operators, and operators re-used from other text processing pipelines are becoming an integral component of most text processing frameworks. A critical task faced by the IE pipeline user is to run a post-mortem analysis on the output. Due to the diverse nature of extraction operators (often implemented by independent groups), it is time consuming and error-prone to describe operator semantics formally or operationally to a provenance system.  

We introduce the first system that helps IE users analyze pipeline semantics and infer provenance interactively while debugging.  This allows the effort to be proportional to the need, and to focus on the portions of the pipeline under the greatest suspicion. We present a generic debugger for running post-execution analysis of any IE pipeline consisting of arbitrary types of operators. We propose an effective provenance model for IE pipelines which captures a variety of operator types, ranging from those for which full or no specifications are available. We present a suite of algorithms to effectively build provenance and facilitate debugging. Finally, we present an extensive experimental study on large-scale real-world extractions from an index of $\sim$500 million Web documents.

\eat{
Complex information extraction (IE) pipelines inter-connecting a diverse set of operators (e.g., text segmentation, entity identification, record construction and validation) are becoming an integral component of most text processing frameworks. Typically, such IE pipelines are assembled by plumbing together off-the-shelf operators, specially customized operators, and operators re-used from other text processing pipelines. After processing a set of web pages to generate output records, a critical task faced by the IE pipeline user is to run a post-mortem analysis on the output: information extraction methods are inherently imprecise in nature and naturally a pipeline that combines one or more such methods also suffers from similar impreciseness. Therefore, post-execution analysis is an essential exercise carried out not only during the development of an IE pipeline but also post-development. Debugging an IE output is challenging for several reasons. First, we need to build an effective method to trace and link enough information regarding each output record. Second, operators in an IE pipeline may substantially differ in their behaviour and their impact on the output. Furthermore, exact specifications of each operator (e.g., off-the-shelf operators) may not available. Finally, due to limited human resources, an effective debugger must restrict the amount of input necessary to resolve an output record.

In this paper, we present a generic debugger for running post-execution analysis of any IE pipeline consisting of arbitrary types of operators. We propose an effective provenance model for IE pipelines which captures a variety of operator types, ranging from those for which full or no specifications are available. We present a suite of algorithms to effectively build provenance and facilitate debugging. Finally, we present an extensive experimental study on large-scale real-world extractions from an index of $\sim$500 million Web documents.
}
\end{abstract}

\vspace{-2mm}
\section{Introduction}
\label{sec:intro}

Growing amounts of knowledge is being made available in the form of unstructured text documents such as, web pages, email, news articles, etc. Information extraction (IE) systems identify structured information (e.g., people names, relations betwen companies, people, locations, etc.) and, not surprisingly, IE systems are becoming a critical first-class operator in a large number of text-processing frameworks. As a concrete example, search engines are moving beyond a ``keyword in, document out'' paradigm to providing structured information relevant to users' queries (e.g., providing contact information for businesses when user queries involve business names). For this, search engines typically rely on having available large repositories of structured information generated from web pages or query logs using IE systems. With the increasing complexity of IE pipelines, a critical exercise for IE developers and even users is to {\em debug}, i.e., perform a thorough post-mortem analysis of the output generated by running an entire or partial extraction pipeline. Despite the popularity of IE pipelines, very little attention has been given to building effective ways to trace the control or data flow through an extraction pipeline.

\vspace{-2mm}
\begin{example}
\label{ex:one}
Consider an IE pipeline for extracting contact information for businesses, namely, business name, address (one or many), phone number (one or many), from a set of web pages. The pipeline, in addition to others, consists of operators (a) to clean and parse html web pages, (b) to classify `blocks' of text in a web page as being useful or not for this task, (c) extract business names, (d) extract address(es). (We discuss this real-world pipeline in detail later in Section~\ref{sec:background}.) Two interesting points to note here: First, in practice, such complex pipelines may be put together using off-the-shelf operators (e.g., html parsers or segmenters) along with some newly designed as well as some re-usable operators from other systems. Second, IE is an erroneous process and oftentimes, output from an IE pipeline may miss some information (e.g., a record where contact information is present but business name is absent) or may generate unexpected output (e.g., associate a fax number with a business instead of phone number). 

Say a user of this IE pipeline processes a batch of web pages and generates a set of (partial, complete, or incorrect) output records. Given the output, the user may be interested in understanding why certain incorrect records were generated to identify and eliminate their `sources'; similarly, the user may also be interested in understanding why certain records were missing  attributes in the output to identify the `restrictive' operators in the pipeline.
\end{example}
\vspace{-2mm}

To date, there have been two main approaches for understanding the output from an IE but neither fully addresses the problem of debugging arbitrary IE pipelines. The first approach is to build statistical models to predict the output quality of an IE system~\cite{icde08/jain,goodk/jain}. However, these models address the more modest goal of assessing the overall output quality and lack the intuitive interaction necessarily for building debuggers to trace the generation of an output record. The second approach involves using complete knowledge of how each operator functions. As highlighted by the above example, prior information regarding the specifications of the operators may not be available (e.g., off-the-shelf black-box operators). In the absence of full function specifications of an operator, the only (straightforward) approach to debugging is exploring all data in the pipeline. However,  such an approach is clearly infeasible due to the sheer volume of data. For instance, debugging a simple pipeline involving 10 operators with 10,000 input records per operator would require 100K records to be manually examined. (As we shall see in our experiments in Section~\ref{sec:experiments}, typical data sizes are even larger.)


This paper presents PROBER (for Provenance-Based Debugger),  the first generic framework for debugging information extraction pipelines composed of arbitrary (``black-box'') operators. \eat{Several challenges need to be overcome in order to provide effective ad-hoc debugging functionality for  IE pipelines: }A critical task towards building debuggers is that of tracing and linking output records from each operator and understanding their transformations across different operators in the pipeline. To trace the lineage of any arbitrary record in the output, we propose a novel provenance model for IE pipelines. With debugging in mind, our provenance model tries to minimize the amount of user effort necessary in resolving the fate of the records in the output. For example, provenance for (incorrect) output records {\em only} refer to input tuples that impacted this output record. We present a suite of algorithms to build the provenance for an IE run; our algorithms explore the tradeoff between efficiency of building provenance, and the amount of information captured by it. 

As outlined by Example~\ref{ex:one}, exact functional specifications for operators in an IE pipeline may not be available. However, provenance building can exploit various properties of these operators that can be learned by sampling, namely, monotonoic, one-to-one, one-to-many, or arbitrary. We characterize a diverse set of operators, and their properties, that are found in real-world extraction pipelines and rigorously examine methods to build provenance information for each of these combinations. 

 
In summary, beyond the conceptualization  of \prober\ (Section~\ref{sec:background}), the main contributions of the paper are as follows:
\vspace{-2mm}
\begin{itemize}\itemsep-0.03in
\item A novel provenance model for the task of debugging information extraction pipelines. Our model effectively accounts for scenarios where incomplete (or no) knowledge about the underlying operators in the pipeline is available (Section~\ref{sec:prov}).
\item A suite of effective algorithms to build provenance, given an IE run (Section~\ref{sec:algos}). 
\item An end-to-end solution combining extraction output along with provenance information for debugging (Section~\ref{sec:alltogether}).
\item An extensive evaluation over real-world datasets, demonstrating the effectiveness of our framework in debugging extraction pipelines (Section~\ref{sec:experiments}). 
\end{itemize}



\section{Problem Formulation}
\label{sec:background}
\label{sec:motivation}
While IE pipelines may vary in their implementation logic~\cite{dl00/agichtein,cikm09/kasneci,aaai06/pasca,acl06/pantel} several underlying common components can be abstracted from the implementation details. We characterize information extraction pipelines for the task of performing post-mortem analysis.


\vspace{-2mm}
\begin{nameddefinition}{Record}
A {\em record} $r$ is a basic unit of data (e.g., a tuple), consisting of a globally unique identifier $I(r)$, and value $V(r)$. We use $\Rec$ denote the set of all records. 
\end{nameddefinition}
%


\vspace{-5mm}
\begin{nameddefinition}{Operator}
An {\em operator} is defined by a function $O:(I_1, I_2, \cdots, I_N)\rightarrow R$, where each $I_i\subseteq \Rec$ is a set of records. In practice, the function $O$ may be unknown to us.
\end{nameddefinition}
%
\vspace{-1mm}
\noindent Intuitively, an operator takes as input an $N$-tuple of sets of records and outputs one set of records. 
Specifications on how an operator generates an output record may be available in varying forms. Specifically, we consider the following four scenarios involving operator specifications. 
%
%
\begin{figure}[t]
\centering
\vspace{-4mm}
\includegraphics[width=0.7\columnwidth]{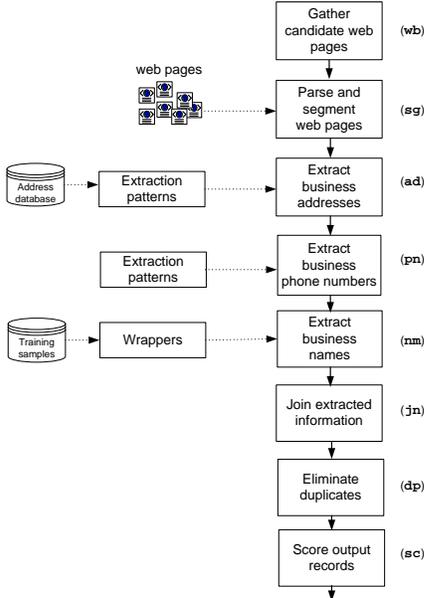}
\vspace{-5mm}
\caption{Example of an IE pipeline to generate business names and their contact information.}
\label{fig:goldrush}
\vspace{-5mm}
\label{fig:example}
\end{figure}

An operator is said to be a {\em black-box} if we have no information about it. In this case, naturally, the only way to gain information about a black-box operator is by executing it on input sets of records. In contrast, we have {\em exact} information about an operator $O$ if we know precisely {\em which} input records contributed to each output record, {\em and how}. We have {\em Input-Output (IO) specifications} when for each output record, we know which input records were used to construct it, however exactly how a record is generated is unknown to us. Finally, we may have {\em integrity constraints}, e.g., key-foreign key relationships, satisfied by the input and output records. For instance, an operator may support a `debug' mode where each output record is assigned an id associated with the input records that generated it. Effectively, using key-foreign keys we have the same information as that in IO specifications, but this information is (indirectly) available using dependencies on the values of {\em fields} in input and output records. 
Next, we define various (standard) properties of an operator, that help design specialized algorithms for building provenance and debugging effectively. As we will see later, these properties may be learned by sampling or the operator specifications (when available) described above. 

\vspace{-2mm}
\begin{nameddefinition}{Properties}
\vspace{-1mm}
\label{def:props}
\begin{itemize}
\itemsep-2pt
{\footnotesize
\item {\bf \monotonic:} Operator $O$ is \monotonic\ iff $\forall I_1,I_2\subset \Rec: (I_1\subseteq I_2)\Rightarrow (O(I_1)\subseteq O(I_2))$.
\item {\bf \oneone:} Operator $O$ is \oneone\ iff: (a) $\forall I\subset\Rec: O(I) = \bigcup_{r\in I} O(\{r\})$; (b) $\forall r\in \Rec$: $|O(\{r\})|\leq 1$.
\item {\bf \onemany:} Operator $O$ is \onemany\ iff $\forall I\subset\Rec: O(I) = \bigcup_{r\in I} O(\{r\})$.
\item {\bf \manyone:} Operator $O$ is \manyone\ iff $\forall I\subset\Rec$, $\exists$ a partition $P_I=\{I_1, \ldots, I_n\}$ 
of $I$\footnote{(a) $I = \bigcup_{i=1}^{n} I_i$ (b) $\forall i\neq j: (I_i\cap I_j)=\emptyset$.} 
such that: (a) $O(I)=\bigcup_{i=1}^{n} O(I_i)$; and (b) $\forall i: |O(I_i)|\leq 1$.
}\end{itemize}
\vspace{-4mm} \end{nameddefinition}
\vspace{-6mm}


\begin{nameddefinition}{Extraction Pipeline}
An {\em extraction pipeline} $P$ is defined by a DAG $G(V,E)$ consisting of a set $V$ of nodes and a set $E$ of edges where each node $v\in V$ corresponds to an operator $O$ in the pipeline. An edge $a\rightarrow b$ between nodes $a$ and $b$ indicates that the output from the operator represented by $a$ is input to operator represented by $b$. We have a single special node $s\in V$ with no incoming edges representing the operator that takes input to the pipeline, and one special node $t\in V$ with no outgoing edges representing the operator that outputs the final set of records.
\end{nameddefinition}
We now discuss a real-world extraction pipeline (used at Yahoo!), which forms the basis of our illustrations in this paper. 
\subsection*{Motivating Example}

Figure~\ref{fig:goldrush} shows a real-world extraction pipeline, \grush, for building a large collection of businesses (see Example~\ref{ex:one}) by extracting records of the form $\langle n, a, p\rangle$, where business $n$ is located at address $a$ with contact number $p$. The first step is to build a set of web pages likely to contain information regarding businesses which is done using a variety of document retrieval strategies. Specifically, we issue manually generated domain-dependent queries (e.g., ``Toyota car dealership locations'') as well as use form filling methods where entries such as, model, make, and zipcode, may be filled in order to fetch a list of car dealerships. This operator, denoted by \wb\ is an example of a black-box operator with arbitrary properties. 

Given a collection of web pages, operator \seg\ parses the html page and identifies appropriate segments of text in this page, where ideally, each segment contains a complete target record (see Figure~\ref{fig:sample-data} in the appendix for a real-world example). These segments are then processed by operators, \add\ and \phn, which respectively identify an occurrence of an address and a phone number. The annotation from one operator is used by the subsequent operator to identify regions of text that should not be processed. \add\ and \phn\ are implemented using hand-crafted patterns based on a dictionary of address formats. The \nm\ operator on the other hand needs to identify names of business which may be arbitrary strings and for this, we follow a wrapper-induction appraoch. In particular, using some training examples we learn a wrapper rule to identify candidate business names; these rules are based on the document structure of the html content. Of course, several other implementations for each of these operators are possible and the implementation details are orthogonal to our discussion since our goal is to build debuggers for pipelines with black-box operators where no implementation information may be available. The \jn\ operator joins outupt from \add, \phn, and \nm\ to build candidate output records which are, in turn, processed by \dup\ to eliminate duplicates. The final operator, assignes a confidence score \sr\ to each output record. 

\sloppy We note that all our implementations of the above operators are monotonic. (Obviously, there may be non-monotonic implementations in other pipelines, but we primarily consider monotonic operators in this paper.) Although monotonic, the operators from the pipeline span a variety of properties, e.g., segmentation is a \onemany\ operation, and by design one address is extracted from each segment, so address extraction is \oneone, while de-duplication is \manyone. Candidate webpage generation and wrapper training, on the other hand are arbitrary, i.e. ``\manymany''.



Given unexpected output records, an IE developer may want to answer some natural questions about the output. (Figure~\ref{fig:sample-data} in the appendix shows an example where \seg\ generates an incorrect segment that leads to missing one address and extracting one incorrect address.) Specifically, a developer may be interested in tracing all or part of the input records that contributed to a particular output record. For instance, given an incorrectly extracted record, we would like to know only the relevant subset of webpages and training data that impacted it, i.e., the {\em minimal} amount of input data necessary to identify the error. Motivated by the above observations, we focus on the following problem in this paper. 

\vspace{-2mm}
\begin{problem}
Given a pipeline $P$, input $I$, and partial information about operators in $P$, we would like to (1) build {\em provenance} for the set of (intermediate and final) records in the pipeline; (2) expose provenance to developers through a query language and guide them in debugging the pipeline.
\end{problem}


\section{Provenance for IE Pipelines}
\label{sec:prov}

The notion of provenance is relatively well-understood for traditional relational databases (refer~\cite{ikeda-survey,provenance}). A commonly advocated model~\cite{uldb06} is to use a boolean-formula provenance, e.g., $S_1\wedge (S_2\vee \neg S_3)$. For the purpose of debugging extraction pipelines, such provenance models are not appropriate for two main reasons. First, unlike relational queries where the exact specifications of each operator are known, we may have black-box operators in our extraction pipeline. Second, for debugging, ideally we would like to limit the number of records (and simplify their interdependencies typically represented as boolean formulas for relational operators) a human has to assess in order to understand the issue at hand. With these in mind, we define a provenance model based on {\em minimal subsets} of operator inputs that capture necessary information (Section~\ref{subsec:upm}) and extend this basic model to operators where multiple minimal subsets may exist (Section~\ref{subsec:cp}) .

\subsection{MISet: Basic Unit of Provenance}
\label{subsec:upm}

To define the provenance of an IE pipeline $P$, we begin by defining the provenance for each operator in $P$;
the subsequent sections show how to construct the provenance for each operator in $P$ (Section~\ref{sec:algos}) and for $P$ by composing individual operators' provenance (Section~\ref{sec:alltogether}). We primarily confine ourselves to extraction pipelines consisting of only monotonic operators (see Definition~\ref{def:props}), which are a common case in practice (as in our motivating example from Section~\ref{sec:motivation}); extensions to non-monotonic operators is very briefly discussed in the appendix (Section~\ref{app:extensions}), but largely left as future work. 

We define the provenance of an extraction operator $O$ based on the provenance for each output record $r\in R$ for $O$. Ideally, we would like the provenance of $r$ to represent precisely the set of records that contributed to $r$, however, as we will see, in practice it may not be possible to always determine the precise set of contributing records (e.g., in the absence of exact information about $O$), and even if possible it may be computationally intractable. For our goal of building a debugger, we observe that one of the main operations we expect users to perform is look at an (erroneous) output record $r$, and explore its provenance to determine the cause of this error. Therefore, a suitable provenance model is one that enables users to examine the fewest records required to decide the fate of an output record $r$. Formally, we define a basic unit of provenance, called {\miset} as follows:
\vspace{-1mm}
\begin{nameddefinition}{MISet}
Given an operator $\op$, its input $I$ and output $R$, we say that $I_s\subseteq I$ is a {\em Minimal Subset} (\miset) of $r\in R$ if and only if: (1)  $r\in \op(I_s)$; and (2) $\forall I'\subset I_s: r\not\in \op(I')$. We use $\Mall(\op,I,r\in R)$ to denote the set of all \miset s of $\op$ for input $I$ and output record $r\in R$.
\end{nameddefinition}
\vspace{-1mm}
\noindent Intuitively, an \miset\ gives the fewest input records required for a particular output record $r$ to be present. Therefore, an \miset\ provides users with one possible reason for the occurrence of $r$. This, in turn, reduces the burden of manual annotation on the users; in the absence of \miset s, a user may have to explore the entire input to understand what caused an error in the output. The notion of \miset s primarily focuses on debugging the {\em presence} of records in the output;  in Appendix~\ref{app:extensions} we briefly discuss a corresponding notion (\maset s) for the case of non-existence of records in the output, but leave details for future work. 


In practice, we may have more than one \miset\ possible for an output record as shown by the following example.
\vspace{-1mm}
\begin{example}
\label{ex:miset}
Consider a (simple) record validation operator (e.g., \sr\ in Figure~\ref{fig:example}) that computes the ``support'' of each record and outputs only records with sufficient support. Suppose \sr\ outputs a record $r$ if there are atleast 50 input records supporting it. Given an input of 100 records that could support $r$, the \miset\ for $r$ is {\em any} subset of the input records of exactly 50 records.
\end{example}
\vspace{-1mm}
\noindent When multiple \miset s are available, several ways of building provenance are possible, each differing in the extent to which they impart infomation and execution speed, as explored next. 

\subsection{Handling Multiple MISets}
\label{subsec:cp}

Several formalisms for provenance model are possible when multiple \miset s are available. We rigorously examine compositions of \miset s, while capturing the spectrum of complete (and potentially intractable) provenance, to more tractable (but approximate) provenance. Later in Section~\ref{subsec:algos}, we will present algorithms for building each of type of provenance. 

Consider an operator $\op$ which consumes input $I$ and generates output $R$; for a record $r \in R$, we denote the provenance of $r$ as $\P(\op, I, r)$. We use subscripts $\P_{*}$ to capture various types of provenance and when clear from the context, we simply use $\P_{*}(r)$ to denote $\P_{*}(\op, I, r)$.

\paragraph{\bf All- and Any-provenance:} Ideally for any output record, we would like to provide all possible information using \miset s, i.e., capture all possible ``causes'' of an output record. 
\vspace{-1mm}
\begin{nameddefinition}{All-provenance}\label{def:pall}
Given an operator $\op$, input $I$, output $R$, and $r\in R$, we define {\em all-provenance} as $\Pall(r) = \Mall(\op,I,r\in R)$.
\end{nameddefinition}
\vspace{-1mm}

\noindent In many cases $\Pall$ may be intractable to compute or store, and we may need to resort to ``approximations'' of it, presented shortly. Alternatively, we may want to find any one (or $k$) \miset s.

\vspace{-1mm}
\begin{nameddefinition}{Any-provenance}\label{def:pany}
Given integer $k>0$, operator $\op$, input $I$, output $R$, and $r\in R$, we define {\em any-provenance} as any $\Pany(r) \subseteq \Pall(r)$ of size $\min(k,|\Pall(r)|)$.
\end{nameddefinition}
\vspace{-1mm}

\paragraph{\bf Impact-provenance:} Given restricted amount of editorial resources, we may want to explore the most {\em impactful}, i.e., top-$l$ input records sorted by their impact instead of any-$k$ \miset s. Our next definition of provenance ranks tuples based on their expected impact on the output record, measured by the number of \miset s in which a tuple is present.
\vspace{-1mm}
\begin{nameddefinition}{Impact-provenance}\label{def:pimp}
Given an operator $\op$, input $I$, output $R$, and $r\in R$, we define {\em impact-provenance} as $\Pimp(r) = \{(i,c_i) | \exists M\in \Pall, i\in M, c_i=\sum_{M\in \Pall, i\in M}1 \}$.
\end{nameddefinition}
\vspace{-1mm}

\paragraph{\bf Union- and Intersection-provenance:} Our next goal is to summarize $\Pall$ using two approximations: (1) We obtain an ``upper bound'' provenance that captures the set of all possible inputs {\em possible} for $r$, instead of exact combinations of inputs. Therefore, we define the union-provenance of $r$ to be the union of all its \miset s.  (2) We obtain a ``lower bound'' provenance that captures the set of all possible inputs {\em necessary} for $r$; we define the intersection-provenance of $r$ to be the common input records among all \miset s.
\vspace{-1mm}
\begin{nameddefinition}{Union-Provenance}\label{def:punion}
Given an operator $\op$, input $I$, output $R$, and $r\in R$, we define {\em union-provenance} as $\Punion(o) = \bigcup_{I_s\in \Mall(\op,I,r\in R)} I_s$.
\end{nameddefinition}
\vspace{-1mm}
\vspace{-1mm}
\begin{nameddefinition}{Intersection-Provenance}\label{def:pint}
Given an operator $\op$, input $I$, output $R$, and $r\in R$, we define {\em intersection-provenance} as $\Pint(o) = \bigcap_{I_s\in \Mall(\op,I,r\in R)} I_s$.
\end{nameddefinition}
\vspace{-1mm}

\noindent It can be seen easily that for operators with unique \miset s, $\Punion$ and $\Pint$ coincide.

\eat{
\begin{lemma}
For a \manyone\  (and hence a \oneone) operator $\op$, we have $\forall I\subset \Rec, R=\op(I):  \Punion(\op,I,r\in R) =\Pint(\op,I,r\in R)$.
\end{lemma}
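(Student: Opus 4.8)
The plan is to reduce the claim to the uniqueness of the \miset. Observe first that the intersection of the \miset s of $r$ is always contained in their union, so $\Pint(r)\subseteq\Punion(r)$ holds for \emph{every} operator; the content of the statement is thus the reverse inclusion. Now $\Punion(r)\subseteq\Pint(r)$ says that every record lying in some \miset\ in fact lies in every \miset, which is equivalent to asserting that all members of $\Mall(\op,I,r\in R)$ coincide, i.e.\ that $r$ has a \emph{unique} \miset. Hence it suffices to prove: if $\op$ is \monotonic\ and \manyone, then for every input $I$ and every $r\in R=\op(I)$ there is exactly one \miset. Since \oneone\ is the special case of \manyone\ obtained by taking the partition into singletons (condition (a)--(b) of Definition~\ref{def:props}), this simultaneously settles the \oneone\ case mentioned in the statement.

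Second, I would exhibit a canonical generating set. Applying the \manyone\ property to $I$ fixes a partition $P_I=\{I_1,\dots,I_n\}$ with $\op(I)=\bigcup_i \op(I_i)$ and $|\op(I_i)|\le 1$ for all $i$. Because $r\in\op(I)$, some block $I_j$ satisfies $\op(I_j)=\{r\}$. Using monotonicity I would then prune $I_j$ down to a minimal subset: removing records one at a time and retaining only those deletions that do not destroy $r$ yields an $I_s\subseteq I_j$ with $r\in\op(I_s)$ and $r\notin\op(I')$ for every $I'\subsetneq I_s$, which is by definition an \miset. This establishes existence and identifies a natural candidate for the unique \miset.

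The hard part is uniqueness, and this is where I expect the real work to lie. Given an arbitrary \miset\ $I_s'$ of $r$, I must show $I_s'=I_s$. The natural route is to argue that the records responsible for $r$ form a single group that any valid \manyone\ partition must respect, so that minimality forces every \miset\ to be exactly the set of generators of $r$, with monotonicity ruling out a strictly smaller or incomparable competitor. The subtlety is that Definition~\ref{def:props} only guarantees the \emph{existence} of a suitable partition, not a canonical one, so two \miset s could a priori descend from two different groupings of $I$; pinning down uniqueness therefore hinges on showing that the generating group of a given output record is forced by the operator's behavior rather than by the choice of partition. This is precisely the step where aggregation/threshold-style operators --- such as the support operator of Example~\ref{ex:miset}, which admits many equal-sized minimal subsets --- sit at the boundary, and the argument must exploit exactly the feature that separates genuine \manyone\ operators from \manymany\ ones.
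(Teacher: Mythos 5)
Your opening reduction is the right lens: since $\Pint(r)\subseteq\Punion(r)$ holds trivially, the lemma is equivalent to the claim that every output record of a monotonic \manyone\ operator has a \emph{unique} \miset, and your existence-by-pruning argument inside a partition block is sound (it is essentially Algorithm~\ref{algo:anyset}). But the uniqueness step you defer as ``the hard part'' is not merely hard --- it is false, and the operator you yourself flag as sitting ``at the boundary'' is a genuine counterexample, not a boundary case. Take the support operator of Example~\ref{ex:miset} with threshold $2$: $\op(I)=\{r\}$ if $|I|\geq 2$ and $\op(I)=\emptyset$ otherwise. This operator is monotonic, and it satisfies the \manyone\ condition of Definition~\ref{def:props} for \emph{every} input, because the trivial one-block partition $\{I\}$ always works (the output never contains more than one record). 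Yet for $I=\{a_1,a_2,a_3\}$ the \miset s of $r$ are exactly the three $2$-element subsets, so $\Punion(r)=I$ while $\Pint(r)=\emptyset$. The same failure occurs for de-duplication --- the paper's own illustration of a \manyone\ operator --- since each individual copy of a duplicated record is by itself a singleton \miset, making the intersection empty whenever two or more copies exist; and it even occurs for \oneone\ operators when two distinct inputs map to the same output, a possibility the paper's proof of Theorem~\ref{thm:111nall} explicitly contemplates when it branches on $|\Punion(o)|\geq 2$. The structural reason your sketched uniqueness argument cannot be repaired is exactly the subtlety you noticed: Definition~\ref{def:props} only asserts the existence of \emph{some} partition per input, and nothing forces the ``generating group'' of $r$ to be respected across different inputs or across different valid partitions of the same input.

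Accordingly, the honest conclusion is a refutation, not a proof: either the hypothesis must be strengthened to ``$\op$ has a unique \miset\ for $r$'' (under which the claim is immediate, since the union and intersection of a one-element family coincide), or the conclusion must be weakened. This is effectively what the paper itself does: it offers no actual proof of this lemma (the lemma sits in a commented-out block with a placeholder proof), and the only claim retained in the compiled text is the weaker, correct one --- that $\Punion$ and $\Pint$ coincide for operators with unique \miset s. You had all the pieces to see this; the error was treating the counterexample you found as an obstacle to be argued around rather than as a disproof.
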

\begin{proof}
easy. Extend to other properties?
\end{proof}
}

\section{Inferring Provenance}
\label{subsec:provconstruction}
\label{subsec:propsummary}
\label{subsec:algos}
\label{sec:algos}
\begin{table}[t!]
\begin{center}
\scriptsize
\begin{tabular}{|c| c c c c|} 
\toprule
{\bf Properties} & {\bf $\Pall$,$\Pimp$} & {\bf $\Pany$}& {\bf $\Punion$}& {\bf $\Pint$} \\ 
\midrule
{\bf \arbitrary} & \multirow{2}{*}{\#P-complete$^\ddagger$} & \multirow{2}{*}{${\cal O}(MN\alpha^{k})^\dagger$} & \multirow{2}{*}{${\cal O}(2^N)^\ddagger$} & \multirow{2}{*}{${\cal O}(MN)$} \\
{\bf \manyone} & & & & \\ \midrule
{\bf \onemany} & ${\cal O}(M$+$N)$ & ${\cal O}(M$+$N)$ & ${\cal O}(M$+$N)$ & ${\cal O}(M$+$N)$ \\ \midrule
{\bf \oneone} & \multirow{2}{*}{${\cal O}(N)$} & \multirow{2}{*}{${\cal O}(N)$} & \multirow{2}{*}{${\cal O}(N)$} & \multirow{2}{*}{${\cal O}(N)$}  \\
{\bf {\tt unique \miset}} & & & & \\
\bottomrule
\end{tabular}
\vspace{-4mm}
\caption{\label{tab:summary}\scriptsize Complexity of our algorithms for obtaining $\Pall$, $\Pany$, $\Punion$, $\Pint$, and $\Pimp$ for various properties of an operator $\op$ with input $I$ and output $O$ ($N=|I|$ and $M=|O|$). $^\dagger$ denotes the number of \miset s that need to be found, and $\alpha$ is the size of the largest \miset\ ($\alpha\leq N$). $^\ddagger$ denotes cases where complexity becomes PTIME when restricted to bounded-size \miset s.}
\vspace{-6mm}
\end{center}
\end{table}

We now turn to the critical task of deriving the provenance formalisms proposed in Section~\ref{subsec:cp}. We primarily consider the generic black-box operator case while rigorously examining various properties described in Section~\ref{sec:background}, and our carry over for more specific cases (e.g., exact or I/O specfications); however whenever necessary we will point out the differences. Table~\ref{tab:summary} summarizes the results on provenance inference achieved in this section, with details in the following subsections. All time complexities in the table are given in terms of the size of the input $N$ and the size of the output $M$. The table gives time complexity assuming an operator can be executed freely (i.e., in ${\cal O}(1)$); depending on the running time of the operator, the appropriate factor can be multiplied.


\subsection{Unique \miset\ Operator} 
We begin with the case of when a combination of input $I$, output $R$, and operator $O$ has a unique \miset\ for each output record. Note that we don't assume that we know the uniqueness of \miset s; instead, we only need the {\em existence} of a unique \miset. (That is, our results hold even in the case when we don't have any information about the black-box operator, but it just happens that the operator functions in a way that creates a unique \miset\ for output records.) We have the following main result for unique \miset\ operators. (Complete proofs for all results in the paper are presented in Appendix~\ref{sec:proofs}.)
\begin{theorem}[Compute \miset]\label{thm:umset}
Given any monotonic operator $\op$, input $I$, and output $R$, if $\op$ has a unique \miset\ for each output record $r\in R$, then the unique $\miset$ for $r$ can be constructed in ${\cal O}(N)$.
\end{theorem}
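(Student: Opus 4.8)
The plan is to construct the unique \miset\ via a single \emph{deletion scan} over the input, using exactly $N=|I|$ invocations of $\op$. Concretely, maintain a candidate set $C$, initialized to $C := I$; since $r\in R=\op(I)$, the invariant $r\in\op(C)$ holds initially. Iterate over the records $x\in I$ once, in any fixed order: test whether $r\in\op(C\setminus\{x\})$, and if so set $C := C\setminus\{x\}$ (discard $x$), otherwise retain $x$. Because each record is tested exactly once and discarding only shrinks $C$, the invariant $r\in\op(C)$ is preserved throughout, and the procedure makes precisely $N$ calls to $\op$; under the convention that an operator call costs ${\cal O}(1)$, the total running time is ${\cal O}(N)$.

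Next I would argue that the set $C$ returned by this scan is always a \miset, for \emph{any} \monotonic\ operator $\op$ (uniqueness not yet needed). First, $r\in\op(C)$ by the invariant. For minimality, fix any $x$ in the final set $C$. At the moment $x$ was processed the candidate set was some $C_x\supseteq C$ with $x\in C_x$, and $x$ was retained precisely because $r\notin\op(C_x\setminus\{x\})$. Since $C\setminus\{x\}\subseteq C_x\setminus\{x\}$, monotonicity gives $\op(C\setminus\{x\})\subseteq\op(C_x\setminus\{x\})$, hence $r\notin\op(C\setminus\{x\})$; that is, removing any single element of $C$ loses $r$. To upgrade this to full minimality against an arbitrary proper subset $I'\subsetneq C$, I pick any $x\in C\setminus I'$, observe $I'\subseteq C\setminus\{x\}$, and apply monotonicity once more to conclude $r\notin\op(I')$. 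Therefore $C$ satisfies both defining conditions of a \miset.

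Finally, the uniqueness hypothesis closes the argument: the scan always returns some \miset, and by assumption there is exactly one \miset\ for $r$, so $C$ must be that unique \miset\ regardless of the order in which records were processed. I expect the minimality step to be the main obstacle, since it is the only place the monotonicity of $\op$ is genuinely exploited, and it requires the two-stage reasoning above (first ruling out single-element deletions, then bootstrapping to all proper subsets). As an alternative route achieving the same bound, one could compute $S=\{x\in I: r\notin\op(I\setminus\{x\})\}$ with $N$ independent operator calls; here the direction ``$x\in S \Rightarrow x$ lies in the \miset'' relies essentially on uniqueness (if $r\in\op(I\setminus\{x\})$ then, by iterated monotone deletion, $I\setminus\{x\}$ would contain a \miset\ avoiding $x$, contradicting the existence of a single \miset\ containing $x$), while the reverse inclusion follows directly from monotonicity, so that $S$ coincides with the unique \miset.
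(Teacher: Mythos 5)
Your deletion scan is precisely the paper's own construction (its Algorithm for computing a single \miset: a one-pass greedy removal using $N$ operator calls), and your final step---that under the uniqueness hypothesis whatever \miset\ the scan returns must be \emph{the} unique one---is exactly how the paper concludes, so the proposal is correct and takes essentially the same route; if anything, your minimality argument (single-element deletions first, then bootstrapping to arbitrary proper subsets via monotonicity) is spelled out more carefully than in the paper. One small slip in your closing side remark: the two directions are swapped---$x\in S \Rightarrow x\in M$ follows from monotonicity alone (if $x\notin M$ then $M\subseteq I\setminus\{x\}$, so $r\in\op(I\setminus\{x\})$), whereas $M\subseteq S$ is the inclusion that genuinely needs uniqueness, and it is this inclusion that your parenthetical argument actually proves.
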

\noindent As a consequence, all entries in Table~\ref{tab:summary} for unique \miset\ operators can be solved in ${\cal O}(N)$. The above result follows from: (a) the following lemma that tests for uniqueness of \miset s; and (b) the fact that a single \miset\ for any monotonic operator can be computed efficiently using Algorithm~\ref{algo:anyset} (Lemma~\ref{lem:anyset} presents a more general result for $k$ \miset s computation shortly).
\begin{lemma}[Uniqueness Test]\label{lem:utest}
Given any monotonic operator $\op$, input $I$, output $R$, and any $r\in R$, Algorithm~\ref{algo:unique} tests in ${\cal O}(N)$ whether there is a unique \miset\ for $r$.
\end{lemma}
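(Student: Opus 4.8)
The plan is to characterize the unique \miset\ explicitly in terms of \emph{critical} input records and then reduce the uniqueness question to a single extra operator call. Call a record $e\in I$ \emph{critical} for $r$ if dropping it from the full input already destroys $r$, i.e. $r\notin\op(I\setminus\{e\})$, and let $C=\{e\in I : r\notin\op(I\setminus\{e\})\}$ be the set of all critical records. The algorithm I would analyze computes $C$ by probing $\op(I\setminus\{e\})$ once for each of the $N$ records, and then performs one final test of whether $r\in\op(C)$; it declares the \miset\ unique exactly when this last test succeeds. Under the convention that $\op$ runs in ${\cal O}(1)$, the total cost is $N+1={\cal O}(N)$, giving the claimed bound immediately, so the real content of the lemma is the correctness of this single extra probe.

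Correctness rests on two structural facts, both consequences of monotonicity. First, every critical record lies in every \miset: if some \miset\ $M$ omitted a critical $e$, then $M\subseteq I\setminus\{e\}$, so monotonicity gives $r\in\op(M)\subseteq\op(I\setminus\{e\})$, contradicting $e\in C$; hence $C\subseteq M$ for all \miset s $M$. Second, any set $S$ with $r\in\op(S)$ contains at least one \miset, since running the greedy single-\miset\ construction of Algorithm~\ref{algo:anyset} on $S$ returns one (this is where I would invoke the already-established \miset-computation result for monotonic operators). Together these say that $C$ is a floor contained in every \miset, while ``$r$ survives on a set'' always yields a genuine \miset\ inside it.

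Using these I would establish the equivalence in both directions. If $r\in\op(C)$, then $C$ contains an \miset\ $M\subseteq C$; but $C\subseteq M$ by the first fact, forcing $M=C$, and any other \miset\ likewise contains $C=M$ and is minimal, so it too equals $C$, proving the \miset\ is unique (and equals $C$). Conversely, if $r\notin\op(C)$, I would exhibit two distinct \miset s: an \miset\ exists because $r\in R=\op(I)$, and any \miset\ $M$ must strictly contain $C$ (otherwise $M\subseteq C$ would give $r\in\op(C)$); picking a non-critical $e\in M\setminus C$, failure of criticality means $r\in\op(I\setminus\{e\})$, which by the second fact yields an \miset\ $M'\subseteq I\setminus\{e\}$ that avoids $e$, hence $M'\neq M$. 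Thus the final probe succeeds precisely when the \miset\ is unique.

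The step I expect to be the main obstacle is the non-uniqueness direction: showing that failure of the probe $r\in\op(C)$ genuinely produces two \emph{distinct} \miset s rather than merely ruling out $C$ itself. This is delicate because it simultaneously needs the existence of some \miset\ (from $r\in\op(I)$) and the second structural fact applied to $I\setminus\{e\}$ for a carefully chosen non-critical $e$; both rely on monotonicity, and the argument must ensure the two \miset s really differ (guaranteed here since one contains $e$ and the other cannot). Once this construction is secured, the test $r\in\op(C)$ is seen to be both necessary and sufficient, and the remaining arguments are routine applications of monotonicity.
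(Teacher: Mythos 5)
Your argument is mathematically sound, but note that it proves the lemma by analyzing a \emph{different} algorithm than the one the lemma names. The paper's Algorithm~\ref{algo:unique} first finds one \miset\ $M$ via the greedy construction (Algorithm~\ref{algo:anyset}), then probes $\op(I\setminus\{m\})$ for each $m\in M$: by minimality no other \miset\ $M'$ can be a superset of $M$, so any such $M'$ omits some $m\in M$, giving $M'\subseteq I\setminus\{m\}$ and, by monotonicity, $r\in\op(I\setminus\{m\})$; conversely, a successful probe yields (by the greedy construction on $I\setminus\{m\}$) an \miset\ avoiding $m$, hence distinct from $M$. Your algorithm instead computes the set $C$ of critical records --- which is exactly the paper's intersection-provenance $\Pint(r)$ computed by Algorithm~\ref{algo:pint} --- and performs a single final probe $r\in\op(C)$. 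Your two structural facts and both directions of the equivalence are correct: $C$ lies in every \miset, any set producing $r$ contains an \miset, and your case analysis correctly yields uniqueness (with the unique \miset\ equal to $C$) when the probe succeeds, and two genuinely distinct \miset s when it fails. Both algorithms use ${\cal O}(N)$ operator executions under the paper's unit-cost convention. What your route buys is a clean structural characterization --- the \miset\ is unique if and only if $r\in\op(\Pint(r))$, in which case it equals $\Pint(r)$ --- which nicely ties the uniqueness test to intersection-provenance; what the paper's route buys is that it explicitly produces an \miset\ along the way, which is precisely what Theorem~\ref{thm:umset} then returns as the unique \miset. The only caveat is that Lemma~\ref{lem:utest} is formally a claim about Algorithm~\ref{algo:unique}, so strictly speaking you have established the conclusion (an ${\cal O}(N)$ uniqueness test exists) without verifying that particular algorithm; the two-line argument sketched above closes that gap.
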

\noindent The algorithm works in two stages. First, it finds any \miset\ $M$ for $r$. Second, it attempts to find other sets that might produce $r$ by applying $\op$ on the entire input except one record from $M$. If none of these sets produce $r$, there is no other \miset.
\begin{algorithm}[t]
{\scriptsize
\caption{\scriptsize Testing the uniqueness of \miset s for any monotonic operator.}
\begin{algorithmic}[1]\label{algo:unique}
\REQUIRE $O, I, r\in O(I)$
\STATE Find an \miset\ $M$ (Algorithm~\ref{algo:anyset})
\FOR{$m \in M$}
          \IF{$r\in O(I-\{m\})$}
              \STATE {RETURN ``Non-unique''}
          \ENDIF
\ENDFOR
\ RETURN ``Unique''
\end{algorithmic}
}
\end{algorithm}
\subsection{One-to-One and One-to-Many Operators}
For the relatively simple cases of \oneone\ and \onemany\ operators, we can obtain easily our composite provenances in linear time in the size of the input $N$ for \oneone\ operators and linear time in $N+M$ for \onemany\ operators, as shown by the following theorem.
\begin{theorem}\label{thm:111nall}
Given a \oneone\ operator $\op$, input $I$ of size $N$, and an output record $r\in R$, each of $\Pall(r)$, $\Pany(r)$, $\Punion(r)$, $\Pint(r)$, and $\Pimp(r)$ can be computed in ${\cal O}(N)$. For a \onemany\ operator, the complexity is increased to ${\cal O}(N+M)$, where $M=|O(I)|$.
\end{theorem}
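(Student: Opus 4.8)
\emph{The plan} is to first establish a structural characterization of \miset s that holds precisely because of the decomposition property enjoyed by \oneone\ and \onemany\ operators, and then observe that all five provenance notions collapse to simple functions of one easily computed set.

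\emph{Structural step.} I would prove the following claim: for a \onemany\ operator $\op$ (and hence also in the \oneone\ case), a subset $I_s\subseteq I$ is an \miset\ of an output record $r$ if and only if $I_s=\{x\}$ for a single input record $x$ with $r\in \op(\{x\})$. The forward direction uses the defining identity $\op(I_s)=\bigcup_{x\in I_s}\op(\{x\})$: since $r\in\op(I_s)$, some $x\in I_s$ already satisfies $r\in\op(\{x\})$, so the singleton $\{x\}$ alone produces $r$; minimality (condition (2) of the \miset\ definition) then forces $I_s=\{x\}$. The reverse direction is immediate, since the only proper subset of a singleton is $\emptyset$ and the decomposition gives $\op(\emptyset)=\emptyset$. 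Thus $\Mall(\op,I,r)$ is exactly the set of singletons drawn from $S:=\{x\in I : r\in \op(\{x\})\}$, and because $r\in R=\op(I)=\bigcup_{x}\op(\{x\})$ we are guaranteed $S\neq\emptyset$. Note this step relies only on the decomposition property, not on \monotonic ity per se.

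\emph{Reduction step.} Given the characterization, every provenance type is a trivial function of $S$: $\Punion(r)=S$; $\Pint(r)=S$ when $|S|=1$ and $\emptyset$ otherwise, since the intersection of distinct singletons is empty; $\Pall(r)=\{\{x\}:x\in S\}$; $\Pimp(r)=\{(x,1):x\in S\}$, as each record lies in exactly one \miset; and $\Pany(r)$ is any $\min(k,|S|)$ of these singletons. Hence it suffices to compute $S$ and inspect its cardinality.

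\emph{Cost step and main obstacle.} To compute $S$ for a black-box $\op$ I would execute $\op$ on each of the $N$ singleton inputs and test membership of $r$ in the result; this is the only information available without further specifications, and monotonicity does not let us skip any $x$ if we must report every producer. For a \oneone\ operator each call returns at most one record, so each membership test is ${\cal O}(1)$ and the total cost is ${\cal O}(N)$ (and since $M\leq N$ here this also equals ${\cal O}(N+M)$). The delicate point, which I expect to require the most care, is the accounting for the \onemany\ case: a single call $\op(\{x\})$ may now return many records, so the membership tests cost $\sum_{x}|\op(\{x\})|$ in total. I would argue this sum is ${\cal O}(M)$ under the natural model in which distinct inputs produce distinct output records (consistent with records carrying globally unique identifiers), making the union in $\op(I)=\bigcup_x\op(\{x\})$ disjoint so that $\sum_x|\op(\{x\})|=|\op(I)|=M$ and yielding the ${\cal O}(N+M)$ bound. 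Making this disjointness assumption explicit---or otherwise charging the output-scanning cost directly to reading the $M$ output records---is the crux that separates the \onemany\ bound from the \oneone\ one.
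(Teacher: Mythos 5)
Your proof is correct and takes essentially the same route as the paper's: the paper likewise scans the input, applies $\op$ to each singleton $\{i\}$, and assembles all five provenance types from the set of producing singletons, with the \onemany\ bound of ${\cal O}(N+M)$ attributed to the output possibly being larger than the input. Your explicit singleton-\miset\ characterization and the disjointness caveat in the \onemany\ cost accounting are just more careful statements of what the paper's proof leaves implicit.
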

\subsection{Many-to-one and Arbitrary Operators}
\begin{algorithm}[t]
{\scriptsize
\begin{algorithmic}[1]\label{algo:anyset}
\REQUIRE $O, I, r\in O(I)$
\STATE Set $M=I$
\FOR{$m\in M$}
          \IF{$r\in O(M-\{m\})$}
		\STATE M=M-\{m\}
	\ENDIF
\ENDFOR
\ RETURN M
\end{algorithmic}
\caption{\scriptsize Computing a single \miset\ for any monotonic operator.}
}
\end{algorithm}
\paragraph{\bf Computing $\Pany$:} We can always find an \miset\ using an ${\cal O}(N)$ algorithm for {\em any} monotonic operator $\op$. Algorithm~\ref{algo:anyset} describes how to find such an \miset. Algorithm~\ref{algo:kset}  provides an extension that finds $k>0$ \miset s (when $k$ \miset s exist): For brevity, we specify the algorithm to find a different $(p+1)$th \miset\ given $p$ \miset s. The algorithm is invoked $(k-1)$ times after Algorithm~\ref{algo:anyset} successively adding \miset s to obtain $k$ distinct \miset s. The following lemma establishes our result.
\vspace{-2mm}
\begin{lemma}\label{lem:anyset}
Given any monotonic operator $\op$, input $I$, and output record $r\in \op(I)$, $\Pany$ for $k$ \miset s can be computed in ${\cal O}(N^{k+1})$.
\end{lemma}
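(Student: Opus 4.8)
The plan is to establish Lemma~\ref{lem:anyset} by analyzing the two-phase procedure: first invoke Algorithm~\ref{algo:anyset} to obtain one \miset, then repeatedly invoke the ``find a different \miset'' extension (Algorithm~\ref{algo:kset}) exactly $(k-1)$ additional times. Since the total count of operator calls dominates the running time (each operator execution is charged ${\cal O}(1)$ per the convention stated before Table~\ref{tab:summary}), I would reduce the complexity claim to bounding the number of operator invocations across all $k$ phases. The key subclaim I would prove is that a single invocation of the extension algorithm, given $p$ already-discovered \miset s, runs in ${\cal O}(N^{p+1})$, so that the dominating term over the whole run comes from the final $(k-1)$th extension phase, yielding the stated ${\cal O}(N^{k+1})$.

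First I would verify the base phase. Algorithm~\ref{algo:anyset} starts from $M=I$ and, looping once over each of the $N$ records, drops $m$ whenever $r\in \op(M-\{m\})$; this makes exactly $N$ operator calls and, by monotonicity, returns a genuine \miset\ (removal of any further record would destroy $r$, and no removed record was necessary). This is the ${\cal O}(N)$ claim already asserted in the text, so the base phase contributes only ${\cal O}(N)$. The substance lies in the extension phase. To find a $(p+1)$th \miset\ distinct from the $p$ already found, the natural strategy is: for each way of forcing the new set to differ from all $p$ existing \miset s, rerun the greedy minimization of Algorithm~\ref{algo:anyset} restricted so that the forbidden combination cannot recur. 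Concretely, a new \miset\ differs from a given old \miset\ $M_j$ iff it excludes at least one record of $M_j$; to guarantee distinctness from all $p$ old \miset s simultaneously, I would iterate over the (at most $N^{p}$) choices of one excluded ``witness'' record per old \miset, and for each such choice run the ${\cal O}(N)$ greedy construction on the reduced input. This gives ${\cal O}(N^{p}\cdot N)={\cal O}(N^{p+1})$ operator calls for the $p$-to-$(p+1)$ step, and correctness again follows from monotonicity: the greedy pass over a reduced input yields a valid \miset\ of that reduced input, which is automatically distinct from each $M_j$ because it omits the corresponding witness, provided $r$ is still producible after the exclusion.

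Summing the phase costs, the total is ${\cal O}\!\left(N+\sum_{p=1}^{k-1} N^{p+1}\right)={\cal O}(N^{k+1})$, since the geometric-style sum is dominated by its largest term $N^{k}$ from the final extension, and folding in the base ${\cal O}(N)$ changes nothing; the extra factor of $N$ in the stated bound $N^{k+1}$ absorbs the final greedy pass. I would also note that if at some point fewer than $k$ \miset s exist, every forced exclusion fails to reproduce $r$, the extension reports failure within the same budget, and the procedure terminates having returned all existing \miset s, matching the qualifier ``when $k$ \miset s exist.''

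The main obstacle I anticipate is the distinctness bookkeeping in the extension step: ensuring that forcing exclusion of one witness per previously-found \miset\ is both \emph{sufficient} to guarantee a new \miset\ differs from all prior ones and \emph{exhaustive} enough not to miss reachable \miset s, while keeping the branching factor at $N^{p}$ rather than something larger. The delicate point is that after forcing the witnesses, the greedy pass may reintroduce some previously-excluded records into the final minimal set, so I must argue carefully that distinctness is preserved by the omission of at least one record that every old \miset\ contains; monotonicity is the essential tool, since it guarantees that if $r$ survives the forced exclusion then the greedy minimization produces a bona fide \miset, and if $r$ does not survive, that particular branch simply yields nothing without affecting the complexity bound.
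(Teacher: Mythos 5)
Your proposal matches the paper's own proof essentially step for step: the same greedy Algorithm~\ref{algo:anyset} for the base \miset, the same extension that enumerates one excluded ``witness'' record per previously-found \miset\ (the $p$-tuples of Algorithm~\ref{algo:kset}), the same minimality/monotonicity argument that this enumeration is both sound and exhaustive, and the same tuple-count-times-greedy-pass complexity accounting (your slightly tighter ${\cal O}(N^{k})$ tally is still within the claimed ${\cal O}(N^{k+1})$, which the paper itself refines to ${\cal O}(N\alpha^{k})$). The proposal is correct and takes the same route as the paper.
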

\vspace{-2mm}
\begin{algorithm}[t]
{\scriptsize
\caption{\scriptsize Algorithm for finding an \miset\ different from a given set of $p$ \miset s. The algorithm can be applied multiple times to generate several distinct \miset s.}
\begin{algorithmic}[1]\label{algo:kset}
\REQUIRE $O, I, r\in O(I)$, $p$ \miset s $\{M_1, \ldots, M_p\}$
\FOR{$(m_1,\ldots, m_p)\in M_1\times \ldots \times M_p$}
\STATE Set $I' = I-\{m_1,\ldots,m_p\}$
\IF{$r\in O(I')$}
\STATE RETURN Algorithm~\ref{algo:anyset} result using $O, I', r$ as input.
\ENDIF
\ENDFOR
\STATE RETURN ``No other \miset''
\end{algorithmic}
}
\end{algorithm}
\noindent Note that the actual complexity is ${\cal O}(N\alpha^{k})$ where $\alpha$ is the size of the largest \miset. Therefore, if all \miset s are small, the algorithm runs very efficiently.

\paragraph{\bf Computing $\Pint$:} $\Pint$ for {\em any} arbitrary operator can be computed using an ${\cal O}(N)$ algorithm. Algorithm~\ref{algo:pint} shows how to obtain $\Pint$ and the theorem below establishes the result.
\begin{algorithm}[t]
{\scriptsize
\caption{\scriptsize Computing $\Pint(r)$ for any monotonic operator $\op$.}
\begin{algorithmic}[1]\label{algo:pint}
\REQUIRE $O, I, r\in Op(I)$
\STATE Set $S=\emptyset$
\FOR{$i \in I$}
          \IF{$r\not\in O(I-\{i\})$}
              \STATE $S = S\cup \{i\}$
          \ENDIF
\ENDFOR
\ RETURN $S$.
\end{algorithmic}
}
\end{algorithm}
\vspace{-1mm}
\begin{theorem}[$\Pint$ Computation]
\label{thm:pint}
Given any monotonic operator $\op$, input $I$, and output record $r\in O(I)$, Algorithm~\ref{algo:pint} correctly computes the $\Pint(r)$ with ${\cal O}(M+N)$ executions of $\op$.
\end{theorem}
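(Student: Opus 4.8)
The plan is to reduce Theorem~\ref{thm:pint} to a single pointwise characterization of membership in $\Pint(r)$, from which both correctness and the running time fall out. The claim I would establish is that, for every record $i\in I$,
\[
i\in \Pint(r)\iff r\notin \op(I\setminus\{i\}).
\]
Since Algorithm~\ref{algo:pint} inserts $i$ into its output set $S$ exactly when the test $r\notin\op(I\setminus\{i\})$ passes, this equivalence gives $S=\Pint(r)$ immediately. Before proving it, I would note that $\Pint(r)$ is well defined: as $r\in\op(I)$ and $I$ is finite, Algorithm~\ref{algo:anyset} (Lemma~\ref{lem:anyset}) exhibits at least one \miset, so the intersection $\bigcap_{I_s\in\Mall}I_s$ defining $\Pint(r)$ is taken over a nonempty family.

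For the ``$\Leftarrow$'' direction I would argue by contradiction. If $r\notin\op(I\setminus\{i\})$ but some \miset\ $I_s$ omitted $i$, then $I_s\subseteq I\setminus\{i\}$ and monotonicity of $\op$ would force $r\in\op(I_s)\subseteq\op(I\setminus\{i\})$, contradicting the hypothesis. Hence every \miset\ contains $i$, and so $i$ lies in their intersection $\Pint(r)$.

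The ``$\Rightarrow$'' direction is where I expect the only real subtlety, since it is the step that genuinely uses monotonicity together with the \emph{existence} of a minimal producing subset rather than merely some producing subset. Assuming $r\in\op(I\setminus\{i\})$, I would manufacture a witnessing \miset\ that avoids $i$: run the shrinking procedure of Algorithm~\ref{algo:anyset} on the input $(\op,\,I\setminus\{i\},\,r)$; by its correctness for monotonic operators it returns an \miset\ $I_s\subseteq I\setminus\{i\}$, which therefore does not contain $i$. As $I_s\in\Mall$ is one of the sets intersected to form $\Pint(r)$, this yields $i\notin\Pint(r)$, completing the equivalence.

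Finally, for the complexity I would observe that the loop of Algorithm~\ref{algo:pint} iterates once per record of $I$, performing exactly one evaluation of $\op$ on $I\setminus\{i\}$ followed by a membership test of $r$ against the resulting output of size at most $M$. This is $N$ operator executions---comfortably within the stated $\mathcal{O}(M+N)$ bound since $N\le M+N$---with $\mathcal{O}(M)$ testing per iteration and no auxiliary passes, so the algorithm meets the claimed cost.
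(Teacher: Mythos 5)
Your proof is correct and follows essentially the same route as the paper's: the paper's entire correctness argument is the one-line observation that $i$ lies in the intersection of all \miset s exactly when removing $i$ from the input removes $r$, which is precisely the pointwise equivalence you establish, and the complexity accounting ($N$ operator calls, $\mathcal{O}(M)$ inspection per call) matches as well. The only difference is that you carefully prove both directions of the equivalence (including well-definedness of $\Pint(r)$ and the construction of an $i$-avoiding \miset\ via Algorithm~\ref{algo:anyset}), whereas the paper asserts this ``follows easily.''
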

\vspace{-2mm}
\paragraph{\bf Computing $\Punion$:} To compute the $\Punion$ of an output record $r\in \op(I)$, for each input record $i\in I$, we need to determine whether there exists any \miset\ $M$ containing $i$. We employ a simple approach to determining if there exists any \miset\ with $i$. We simply find all \miset s (using Algorithms~\ref{algo:anyset} and~\ref{algo:kset}), and determine their union. Note that in the worst case, our naive algorithm takes exponential time in $|I|$. Finding better upper or matching lower bounds is an open problem.

\eat{
\begin{algorithm}[t]
\begin{small}
\caption{Algorithm for finding $\Punion$ for a \manyone\ monotonic operator.}
\begin{algorithmic}[1]\label{algo:n1union}
\REQUIRE $O, I, r\in O(I)$
\STATE Set $S=\emptyset$
\FOR{$i \in I$}
\STATE Find all \miset s $\{M_1, \ldots, M_k\}$ for $O, \{I-i\}
\STATE [[Need to write]]
\ENDFOR
\ RETURN $S$
\end{algorithmic}
\end{small}
\end{algorithm}

\begin{theorem}
Given any \manyone\ monotonic operator $\op$, input $I$, $|I|=N$, and an output $r\in R$, Algorithm~\ref{algo:n1union} finds $\Punion(r)$ in ${\cal O}(N2^N)$.
\end{theorem}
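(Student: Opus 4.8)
The plan is to complete the stub Algorithm~\ref{algo:n1union} with a concrete membership test and then prove that this test exactly characterizes the elements of $\Punion(r)$. Since $\Punion(r) = \bigcup_{M \in \Mall(\op, I, r \in R)} M$, an input record $i \in I$ lies in $\Punion(r)$ if and only if at least one \miset\ of $r$ contains $i$. I would therefore reduce the theorem to the following membership criterion, which I claim holds for any monotonic $\op$:
\[ i \in \Punion(r) \iff \exists\, J \subseteq I \setminus \{i\} \text{ such that } r \notin \op(J) \text{ and } r \in \op(J \cup \{i\}). \]
The algorithm then iterates over every $i \in I$, searches the subsets $J \subseteq I \setminus \{i\}$ for a witness to the right-hand side, adds $i$ to the output set $S$ as soon as one is found, and returns $S$.

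Next I would prove the two directions of the criterion, each using only monotonicity. For the backward direction, given a witness $J$, I run Algorithm~\ref{algo:anyset} on input $(\op, J \cup \{i\}, r)$, which returns a genuine \miset\ $M \subseteq J \cup \{i\}$ with $r \in \op(M)$ (the paper already establishes that Algorithm~\ref{algo:anyset} yields a valid \miset\ for any monotonic operator, and minimality within $J\cup\{i\}$ implies global minimality since every proper subset of $M$ is also a subset of $J\cup\{i\}$). If $i \notin M$ then $M \subseteq J$, so monotonicity gives $r \in \op(M) \subseteq \op(J)$, contradicting $r \notin \op(J)$; hence $i \in M$ and $i \in \Punion(r)$. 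For the forward direction, suppose $i$ lies in some \miset\ $M$, and set $J = M \setminus \{i\} \subseteq I \setminus \{i\}$. Minimality of $M$ forces $r \notin \op(J)$ since $J \subsetneq M$, while $r \in \op(M) = \op(J \cup \{i\})$ because $J \cup \{i\} = M$. Thus $J$ is a witness, and the returned $S$ equals $\Punion(r)$.

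Finally I would count the running time. The outer loop runs $N$ times; for each $i$ the inner search examines the at most $2^{N-1}$ subsets $J \subseteq I \setminus \{i\}$, spending ${\cal O}(1)$ executions of $\op$ per subset to test $r \notin \op(J)$ and $r \in \op(J \cup \{i\})$. This gives ${\cal O}(N \cdot 2^{N-1}) = {\cal O}(N 2^N)$ executions of $\op$, matching the claimed bound and consistent with the ${\cal O}(2^N)$ entry for \manyone\ and \arbitrary\ operators in Table~\ref{tab:summary}; I would remark that the argument uses only monotonicity, so the \manyone\ hypothesis does not lower the worst case (which is why both table rows share this bound). The main obstacle is not the complexity count but pinning the membership criterion down correctly against the definition of \miset: specifically the backward direction must guarantee that the extracted minimal set retains $i$, which is exactly where the monotonicity contradiction is needed. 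A secondary point worth noting is that the more obvious strategy of enumerating all \miset s and unioning them can be far worse than ${\cal O}(N 2^N)$, since an operator such as the support-threshold operator of Example~\ref{ex:miset} may possess exponentially many \miset s; the per-record membership test sidesteps this by never materializing them.
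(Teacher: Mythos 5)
Your proposal is correct, but be aware there is essentially nothing in the paper to compare it against: the theorem you were handed sits inside a block the authors excised with their \texttt{\textbackslash eat} macro, where the referenced algorithm's body is the placeholder ``[[Need to write]]'' and the proof is the single word ``proof.''; what survives in the live text is a different and weaker route, namely computing $\Punion(r)$ by enumerating \emph{all} \miset s via Algorithms~\ref{algo:anyset} and~\ref{algo:kset} and unioning them, with the authors conceding worst-case exponential time and explicitly calling better upper bounds an open problem. Your membership criterion --- $i\in\Punion(r)$ iff some $J\subseteq I\setminus\{i\}$ has $r\notin\op(J)$ and $r\in\op(J\cup\{i\})$ --- is thus a genuinely different decomposition, and both directions check out: forward, $J=M\setminus\{i\}$ works directly from condition (2) of the \miset\ definition; backward, you correctly note that minimality relative to $J\cup\{i\}$ coincides with minimality in $I$ (that condition quantifies only over subsets of the candidate set itself), and monotonicity is exactly what pins $i$ inside the extracted \miset\ $M$, since otherwise $M\subseteq J$ would force $r\in\op(J)$. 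What your route buys is a uniform ${\cal O}(N2^N)$ bound on operator executions that the paper's surviving approach cannot offer: $|\Pall(r)|$ can itself be exponential (the threshold operator of Example~\ref{ex:miset} admits $\binom{100}{50}$ \miset s), and Algorithm~\ref{algo:kset} pays ${\cal O}(N\alpha^{p})$ to produce the $(p+1)$st \miset, so unioning an enumerated $\Pall$ can be far worse than $N2^N$; your test never materializes $\Pall$, as you observe. Two small additions: your criterion is the natural generalization of Algorithm~\ref{algo:pint}'s test for $\Pint$, which probes only the single maximal witness $J=I\setminus\{i\}$; and if you evaluate $\op$ once per subset $S\subseteq I$, caching the bit recording whether $r\in\op(S)$, the operator-execution count drops to ${\cal O}(2^N)$ --- matching the $\Punion$ entry in Table~\ref{tab:summary} exactly --- with the factor of $N$ relegated to cheap post-processing over the cached bits.
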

\begin{proof}
proof.
\end{proof}
}

\paragraph{\bf Computing $\Pall$:} \sloppy For $\Pall$, we show that finding $\Pall$ for an \manyone\ monotonic operator is \#P-complete\footnote{\#P-completeness corresponds to the class of hard counting problems.} in the size of the input; i.e., there does not exists any polynomial-time algorithm to compute $\Pall$ exactly. Our result is proved using a reduction from the problem of finding all {\em Minimal Vertex-Cover} in an undirected graph.

\begin{theorem}\label{thm:n1hard}
Given any \arbitrary\ or \manyone\ monotonic operator $\op$, input $I$ of size $N$, output $R$ of size $M$, and an output $r\in R$, it is \#P-complete in $N$ and $M$ to compute $\Pall$.
\end{theorem}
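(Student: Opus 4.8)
The plan is to prove \#P-completeness in two parts---membership in \#P and \#P-hardness---noting at the outset that, since the \arbitrary\ class subsumes the \manyone\ class, it suffices to exhibit a single hard operator that is already \manyone. I would phrase the counting problem as: given $(\op, I, r)$, output $|\Pall(r)|$, the number of \miset s of $r$.

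For membership in \#P, a nondeterministic machine guesses a subset $I_s \subseteq I$ (i.e., $N$ bits) and accepts iff $I_s$ is an \miset. The only delicate point is that minimality must be verifiable in polynomial time: for a \monotonic\ operator it suffices to check $r \in \op(I_s)$ together with $r \notin \op(I_s \setminus \{i\})$ for every $i \in I_s$. Indeed, if deleting any single record destroys $r$, then by monotonicity every proper subset fails too, since for $I' \subsetneq I_s$ we may pick $i \in I_s \setminus I'$, giving $I' \subseteq I_s \setminus \{i\}$ and hence $r \notin \op(I')$ because $\op(I') \subseteq \op(I_s \setminus \{i\})$. This uses at most $N+1$ executions of $\op$, so the number of accepting branches equals $|\Pall(r)|$ and the problem lies in \#P.

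For \#P-hardness, I would reduce from counting minimal vertex covers. Given an undirected graph $G=(V,E)$, create one input record $r_v$ per vertex, so $I = \{r_v : v \in V\}$ and $N=|V|$; fix a single distinguished output record $r$; and define $\op(S) = \{r\}$ when $\{v : r_v \in S\}$ is a vertex cover of $G$, and $\op(S)=\emptyset$ otherwise. This operator is \monotonic\ (any superset of a vertex cover is a vertex cover) and \manyone\ (it emits at most one record on every input, so the trivial one-block partition satisfies Definition~\ref{def:props}). The bijection $S \mapsto \{v : r_v \in S\}$ carries the \miset s of $r$ exactly onto the minimal vertex covers of $G$, because $S$ is an \miset\ iff its vertex set covers $G$ while no proper subset does. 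Hence $|\Pall(r)|$ equals the number of minimal vertex covers, and any algorithm computing $\Pall$ would yield this count. Since counting minimal vertex covers (equivalently, maximal independent sets) is \#P-complete, computing $\Pall$ is \#P-hard; note that here $M=1$, so the hardness already holds in $N$ alone, and together with the membership argument we obtain \#P-completeness.

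The main obstacle is not the reduction, which is direct, but the two facts that bracket it. First, membership in \#P relies on the monotone local-minimality characterization above; without it the obvious verifier would have to scan all proper subsets of $I_s$ and run in exponential time. Second, the whole argument rests on the \#P-completeness of counting minimal vertex covers, the known hard counting problem I would cite. Finally, this also explains the $\ddagger$ entry in Table~\ref{tab:summary}: the reduction forces \miset s of unbounded size (vertex covers can be large), whereas if \miset s are promised to be of bounded size the same guess-and-verify procedure counts them in polynomial time, matching the claim that $\Pall$ becomes PTIME under that restriction.
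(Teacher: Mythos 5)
Your proof follows essentially the same route as the paper's: \#P-membership via the single-element-deletion test for minimality (with monotonicity supplying the step from local to global minimality), and \#P-hardness via a reduction from counting minimal vertex covers using a \manyone\ operator that emits a single fixed output record exactly when the input set corresponds to a vertex cover. One small but genuine improvement in your version: the paper's operator returns $\{0\}$ on non-covers, which strictly speaking breaks monotonicity (since $\{0\}\not\subseteq\{1\}$ when a non-cover is extended to a cover), whereas your choice of returning $\emptyset$ keeps the constructed operator genuinely \monotonic, so your construction is the cleaner instantiation of the same idea.
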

\noindent Shortly, we show that the complexity can be made PTIME by restricting ourselves to \miset s of bounded size.

\paragraph{\bf Computing $\Pimp$:} Finally, we show that the hardness result $\Pall$ can be extended easily to $\Pimp$.
\begin{corollary}\label{cor:n1hard}
Given any \arbitrary\ or \manyone\ monotonic operator $\op$, input $I$ of size $N$, output $R$ of size $M$, and an output $r\in R$, it is \#P-complete in $N$ and $M$ to compute $\Pimp$.
\end{corollary}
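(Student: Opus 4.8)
The plan is to establish \#P-completeness in two parts, membership and hardness. Membership follows exactly as for $\Pall$ in Theorem~\ref{thm:n1hard}: each coefficient $c_i$ appearing in $\Pimp(r)$ counts the \miset s of $r$ that contain the record $i$, and whether a candidate set $M$ is an \miset\ containing $i$ is verifiable with polynomially many evaluations of $\op$ (check $r\in\op(M)$, check $i\in M$, and check $r\notin\op(M\setminus\{m\})$ for each $m\in M$). Hence each $c_i$ is a \#P function and computing all of $\Pimp$ lies in the relevant counting class.

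For hardness, the idea is a reduction from the computation of $\Pall$ for \manyone\ monotonic operators, which is \#P-complete by Theorem~\ref{thm:n1hard}, the difficulty there being to count the number of \miset s. Given any such instance $(\op, I, r)$, I would introduce a single fresh ``universal'' input record $i^*\notin I$, set $I' = I\cup\{i^*\}$, and define a new operator $\op'$ that emits the single record $r$ exactly when $i^*$ is present and $\op$ would have produced $r$ from the remaining input: $\op'(J) = \{r\}$ if $i^*\in J$ and $r\in\op(J\setminus\{i^*\})$, and $\op'(J)=\emptyset$ otherwise.

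The core of the argument is to show that the \miset s of $r$ under $\op'$ are exactly $\{M\cup\{i^*\} : M\in\Mall(\op,I,r)\}$. Membership is immediate since $r\in\op'(M\cup\{i^*\})$; minimality follows because dropping $i^*$ kills the output, while dropping any $m\in M$ reduces to $r\notin\op(M\setminus\{m\})$, which holds by minimality of $M$ under $\op$. Conversely, every \miset\ of $\op'$ must contain $i^*$ and, stripped of $i^*$, must be an \miset\ of $\op$. This yields a bijection in which $i^*$ lies in every \miset\ of $r$, so the coefficient $c_{i^*}$ returned inside $\Pimp(r)$ equals $|\Mall(\op,I,r)| = |\Pall(r)|$. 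A single query to a $\Pimp$-oracle therefore recovers the \#P-hard count of \miset s, establishing \#P-hardness.

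Finally I would verify that the reduction stays inside the required operator class, so that hardness holds genuinely for \manyone\ operators and hence for \arbitrary\ ones, which subsume them. Monotonicity of $\op'$ is straightforward: if $J_1\subseteq J_2$ and $r\in\op'(J_1)$, then $i^*\in J_2$ and, by monotonicity of $\op$, $r\in\op(J_2\setminus\{i^*\})$, giving $r\in\op'(J_2)$. The \manyone\ property is trivial because $\op'$ emits at most the one record $r$, so taking the whole input as a single block satisfies the partition condition with output size at most one. I expect this class-preservation step to be the main obstacle: a naive universal-record construction that passes all of $\op$'s outputs through can break the \manyone\ partition property, which is why it is essential to have $\op'$ emit only $r$ (all that $\Pimp(r)$ depends on). The same construction, together with the bounded-size remark used for $\Pall$, also makes $\Pimp$ PTIME when \miset s have bounded size, matching Table~\ref{tab:summary}.
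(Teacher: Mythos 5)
Your proof is correct and follows essentially the same route as the paper's: the paper likewise adjoins a special gating record $i^*$ (with $I'=I\cup\{i^*\}$ and the output suppressed whenever $i^*$ is absent), so that $i^*$ lies in every \miset\ and its $\Pimp$ coefficient equals the number of \miset s, whose counting is \#P-hard by the proof of Theorem~\ref{thm:n1hard}. Your one deviation---having $\op'$ emit only $\{r\}$ instead of passing through $\op$'s full output---is a minor robustness refinement (the paper's pass-through construction preserves the \manyone\ property only because the hard instances from Theorem~\ref{thm:n1hard} produce a single output record), not a different approach.
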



\eat{
\subsection{Arbitrary Monotonic Operator}
Our hardness result for $\Pall$ for \manyone\ operators carries over for arbitrary monotonic operators.
\vspace{-2mm}
\begin{theorem}\label{thm:arbhard}
Given any monotonic operator $\op$, input $I$ of size $N$, output $O$ of size $M$, and an output $o\in O$, it is \#P-complete in $N$ and $M$ to compute $\Pall$.
\end{theorem}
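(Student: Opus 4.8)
The statement has two parts: the counting version lies in \#P, and it is \#P-hard. The relevant quantity is $|\Pall(r)|=|\Mall(\op,I,r)|$, the number of \miset s of $r$; computing the set $\Pall$ trivially yields this count, so it suffices to show the count is \#P-complete, which entails the stated hardness of $\Pall$. For membership, observe that the minimality clause of the \miset\ definition quantifies over all proper subsets and is naively exponential, but for a \monotonic\ operator it collapses to a single-element removal test: $I_s$ is an \miset\ of $r$ iff $r\in\op(I_s)$ and $r\notin\op(I_s\setminus\{i\})$ for every $i\in I_s$. (If the latter holds and $I'\subsetneq I_s$, choose $i\in I_s\setminus I'$; then $I'\subseteq I_s\setminus\{i\}$, so monotonicity gives $r\notin\op(I')$; the converse is immediate.) This test uses only $|I_s|+1$ executions of $\op$, so a nondeterministic machine can guess $I_s$, verify the \miset\ condition in polynomial time, and accept exactly on \miset s, placing $|\Pall|$ in \#P (counting each execution of $\op$ as a polynomial step, per the convention of this section).

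For hardness I would reduce from the \#P-complete problem of counting the minimal vertex covers of an undirected graph $G=(V,E)$ --- equivalently the maximal independent sets, since the complement of a minimal vertex cover is a maximal independent set. Take one input record per vertex, $I=\{r_v : v\in V\}$, and for $I_s\subseteq I$ let $S(I_s)=\{v : r_v\in I_s\}$. Define a single-output operator by $\op(I_s)=\{r\}$ if $S(I_s)$ is a vertex cover of $G$ and $\op(I_s)=\emptyset$ otherwise; checking the vertex-cover condition is polynomial, so $\op$ is polynomial-time computable. This $\op$ is \monotonic, because supersets of a vertex cover are vertex covers: $I_1\subseteq I_2$ gives $S(I_1)\subseteq S(I_2)$, so $\op(I_1)=\{r\}$ forces $\op(I_2)=\{r\}$. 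It is \manyone\ since $|\op(\cdot)|\le 1$ always, whence the trivial single-block partition witnesses the definition; being a valid operator it is also \arbitrary, so one construction covers both rows of Table~\ref{tab:summary}. By construction $r\in\op(I_s)$ iff $S(I_s)$ is a vertex cover, so the \miset s of $r$ are precisely the $\subseteq$-minimal vertex covers of $G$ and $|\Pall(r)|$ equals their number, giving a polynomial-time parsimonious reduction.

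The main obstacle is the \#P-membership step rather than the reduction: the minimality clause ranges over all proper subsets, and only the monotone single-removal characterization above brings verification --- and hence the \#P upper bound --- down to polynomial time. The reduction itself is short once one observes that an operator returning $r$ exactly on the vertex covers is automatically monotonic and, being single-output, automatically \manyone. The one external ingredient to cite carefully is that counting minimal vertex covers (equivalently, maximal independent sets) is \#P-complete; this supplies the hard instance that the construction transports into our setting.
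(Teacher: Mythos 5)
Your proof is correct and takes essentially the same route as the paper: \#P-hardness via a parsimonious reduction from counting minimal vertex covers (one input record per vertex, a single-output \manyone\ operator that emits $r$ exactly on the vertex covers), and \#P-membership via the single-element-removal characterization of \miset s for monotonic operators, verified nondeterministically. One small point in your favor: returning $\emptyset$ on non-covers keeps the operator genuinely \monotonic, whereas the paper's construction returns $\{0\}$ in that case, which as literally stated violates monotonicity.
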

\vspace{-2mm}

\vspace{-2mm}
\begin{corollary}\label{cor:arbhard}
Given any monotonic operator $\op$, input $I$ of size $N$, output $R$ of size $M$, and an output $r\in R$, it is \#P-complete in $N$ and $M$ to compute $\Pimp$.
\end{corollary}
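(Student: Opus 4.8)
The plan is to obtain Corollary~\ref{cor:n1hard} as a direct consequence of Theorem~\ref{thm:n1hard}, exploiting the fact that $\Pimp(r)$ contains strictly more information than the \emph{count} of distinct \miset s of $r$, which Theorem~\ref{thm:n1hard} already shows is \#P-complete to produce. Concretely, recall that $\Pimp(r) = \{(i,c_i) \mid \exists M\in\Pall,\ i\in M,\ c_i = \sum_{M\in\Pall,\, i\in M} 1\}$, so each $c_i$ records in how many \miset s the input record $i$ participates. The key observation is that if we could compute $\Pimp(r)$ in polynomial time, we could read off $|\Pall(r)| = \Mall(\op,I,r)$ from it, and counting the \miset s is exactly the problem Theorem~\ref{thm:n1hard} proves to be \#P-complete. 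Thus a polynomial-time algorithm for $\Pimp$ would collapse that hardness, giving the \#P-completeness of computing $\Pimp$.

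First I would make precise how the total number of \miset s is recovered from the impact counts. I would reuse the same reduction instance constructed in the proof of Theorem~\ref{thm:n1hard} (the reduction from counting Minimal Vertex Covers), so that the operator $\op$, input $I$, and distinguished output $r$ are fixed with known structure. In that instance the \miset s are in bijection with the objects being counted (minimal vertex covers), so I would argue that from the multiset of counts $\{c_i\}_i$ one can reconstruct $|\Pall(r)|$ by a simple polynomial-time post-processing step. The cleanest way is a double-counting identity: $\sum_{i\in I} c_i = \sum_{M\in\Pall} |M|$, which by itself only gives the \miset s weighted by size; to isolate the cardinality $|\Pall(r)|$ I would instead pick any single input record $i_0$ that is known (by the structure of the reduction gadget) to lie in \emph{every} \miset, whence $c_{i_0} = |\Pall(r)|$ reads off the count directly. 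Alternatively, if the gadget admits no universally-present record, I would augment the reduction with one extra ``anchor'' record forced into every \miset, which is a minor modification preserving monotonicity and the \manyone\ (or \arbitrary) property.

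The main obstacle is the same subtlety that one must check for any Turing/parsimonious-reduction argument: I need membership in \#P as well as the hardness. For hardness the reduction above suffices, since a $\Pimp$ oracle yields the count in polynomial time. For membership in \#P I would observe that each output $(i,c_i)$ has $c_i$ bounded by $|\Pall(r)| \le 2^N$, so each count is polynomially bounded in bit-length, and verifying that a guessed \miset\ is indeed minimal can be done with $O(\alpha)$ executions of $\op$ (by the minimality test implicit in Algorithm~\ref{algo:unique}); hence a nondeterministic machine can count the certificates, placing each $c_i$ in \#P. Combining the \#P-hardness from the reduction with \#P membership gives \#P-completeness, in the size of $N$ and $M$, of computing $\Pimp$. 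I would close by noting, in parallel with the remark following Theorem~\ref{thm:n1hard}, that restricting to \miset s of bounded size makes $\Pimp$ computable in PTIME, since the number of candidate \miset s is then polynomially bounded.
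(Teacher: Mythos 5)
Your proposal is correct and takes essentially the same route as the paper: the paper also reduces counting \miset s (already \#P-hard by Theorem~\ref{thm:n1hard}, via counting minimal vertex covers) to computing $\Pimp$, by adjoining an anchor record $i^*$ and defining $Op'(I_s)=Op(I_s)$ if $i^*\in I_s$ and $Op'(I_s)=\emptyset$ otherwise, so that $i^*$ lies in every \miset\ and its impact count $c_{i^*}$ equals the number of \miset s of the original instance. One caveat: your primary variant (hoping some existing record of the vertex-cover gadget lies in \emph{every} \miset) fails in general, since no vertex need belong to all minimal vertex covers, so the ``anchor'' fallback you describe---which is precisely the paper's construction---is the step that actually carries the proof.
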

\vspace{-2mm}
}
\subsubsection{Bounded-size \miset s}
\label{subsubsec:bdd}
Given the intractability of the general problem for $\Pall$ and $\Pimp$ for \manyone\ and \arbitrary\ monotonic operators, we explore an intuitive tractable subclass. We consider a practical special case of all \miset s being of small size (i.e., bounded by a constant). The following theorem shows that we can now infer all types of $\P$ in polynomial time, using an explicit search.
\begin{theorem}\label{thm:bounded}
Given any monotonic operator $\op$, input $I$, and output $r\in R$ we can find each of $\Pall$, $\Pany$, $\Punion$, $\Pint$, and $\Pimp$ for $r$ in $\sim N^B$, when for every $S\in \Mall(r)$, we have $|S|\leq B$.
\end{theorem}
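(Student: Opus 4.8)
The plan is to exploit the bounded-size hypothesis to reduce the computation of $\Pall$ to an exhaustive search over a polynomial-size family of candidate subsets, after which the remaining four provenances fall out by simple post-processing. Since we are told that every $S\in\Mall(r)$ satisfies $|S|\le B$, every \miset\ of $r$ is one of the $\sum_{j=0}^{B}\binom{N}{j}=\Theta(N^B)$ subsets of $I$ of size at most $B$. First I would enumerate all such subsets $S$ and, for each, test whether it is an \miset\ of $r$. Because the bounded-size assumption guarantees that every genuine \miset\ appears among these candidates, collecting all candidates that pass the test yields exactly $\Mall(r)=\Pall(r)$, with no \miset\ missed and no false positive admitted.

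The key step is to make the \miset\ test cheap, and here monotonicity is essential. I would first check $r\in\op(S)$ with a single execution of $\op$; if this fails, $S$ is discarded. For minimality, rather than examining all $2^{|S|}$ proper subsets, I claim it suffices to check, for each single element $s\in S$, whether $r\notin\op(S\setminus\{s\})$. The forward direction is immediate from the definition of \miset. For the converse, given any proper $I'\subset S$ there is some $s\in S\setminus I'$, so $I'\subseteq S\setminus\{s\}$; monotonicity then gives $\op(I')\subseteq\op(S\setminus\{s\})$, whence $r\notin\op(S\setminus\{s\})$ forces $r\notin\op(I')$. Thus testing a candidate of size at most $B$ costs at most $B+1$ executions of $\op$, i.e.\ ${\cal O}(1)$ for constant $B$, and the full enumeration costs ${\cal O}(N^B)$ executions.

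Having computed the list $\Mall(r)$ explicitly, I would read off the remaining provenances directly: $\Pany(r)$ is obtained by returning the first $\min(k,|\Pall(r)|)$ \miset s in the list; $\Punion(r)$ and $\Pint(r)$ are the union and intersection of the listed \miset s; and $\Pimp(r)$ is obtained by sweeping over the \miset s and incrementing, for each input record $i$, the count $c_i$ of \miset s containing it. Since the total size of all listed \miset s is ${\cal O}(N^B\cdot B)$, each of these post-processing steps runs in ${\cal O}(N^B)$ time, giving the claimed $\sim N^B$ bound throughout.

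I expect the only real subtlety to be the correctness of the reduced minimality test: the argument must use monotonicity to conclude that killing all single-element deletions of $S$ rules out every proper subset, and it must also invoke the bounded-size hypothesis to guarantee that the size-$\le B$ enumeration is exhaustive over $\Mall(r)$. Everything else is bookkeeping, and the contrast with the general \#P-hardness result of Theorem~\ref{thm:n1hard} is exactly what one expects: the exponential blow-up there lives in the \miset\ size, which the hypothesis $|S|\le B$ caps at a constant.
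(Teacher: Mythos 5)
Your proposal is correct and follows essentially the same route as the paper's proof: an explicit search over all subsets of $I$ of size at most $B$ to recover $\Pall$ exactly, followed by deriving $\Pany$, $\Punion$, $\Pint$, and $\Pimp$ from the enumerated list of \miset s. The paper states this in two lines and leaves the details implicit; your elaboration of the minimality test (checking only single-element deletions, justified by monotonicity) is a valid and tidy filling-in of those details rather than a different argument.
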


\section{Putting it all together} 
\label{sec:alltogether}

\subsection{Composing Operator Provenance}
\label{subsec:composition}
\label{sec:composition}

So far, we focused only on computing provenance for a single operator. We now consider composition of provenance from single operators into provenance for a chain of operators. Our goal is to understand to what extent (if at all) we can use each individual operators' provenance to determine the provenance of a pipeline. Formally, we would like to solve the following problem.
\begin{problem}\label{prob:composition}
Given monotonic operators $O_1,O_2$, input $I_1$, outputs $R_1=O_1(I_1)$ and $R_2 = O_2(R_1)$, and $r_2\in R_2$, can we compute $\P_*(O_2\circ O_1,I_1,r_2\in R_2)$ from $\P_*(O_2,R_1,r_2\in R_2)$ and $\P_*(O_1,I_1,r_1\in R_1)$.
\end{problem}
\noindent Intuitively, we are interested in generating all provenance $\P_*$ for the composition operator $O_{12} = (O_2\circ O_1)$ from all the provenance $\P_*$ of each of $O_1$ and $O_2$. Before proceeding to solve the above problem, we make two observations. First, note that Problem~\ref{prob:composition} could have been equivalently defined if $O_1$ and $O_2$ were themselves chains of operations with $\P_*$ being the provenance of these chains. Our algorithms in Section~\ref{subsec:provconstruction}, and hence results in this section, make no assumption on $O_1$ and $O_2$ being single operators, so all our results carry over when they are chains of operators. Second, our goal is to explore to what extent the provenance of $O_1$ and $O_2$ can be reused to generate the provenance of $O_{12}$, without any additional execution of $O_1$ or $O_2$. 

Our first main result shows that the execution of $O_2\circ O_1$ can be completely simulated using $\Pall$ for $O_1$ and $O_2$, and hence all provenance of $O_2\circ O_2$ can be computed as in Section~\ref{subsec:provconstruction}. The theorem gives a constructive algorithm, and hinges on the core idea that $\Pall$ for any monotonic operator captures enough information to execute the operator on any subset of its input.
\begin{theorem}\label{thm:pall-cons}
Given monotonic operator $O_1,O_2$, input $I_1$, outputs $R_1=O_1(I_1)$ and $R_2 = O_2(R_1)$, and $r_2\in R_2$, for any $I_s\subseteq I_1$, we have $r_2\in (O_2\circ O_1)(I_s)$ if and only if $\exists M_2\in \Pall(O_2,R_1,r_2)$ such that $M_2\subseteq O_s$, where $O_s = \{r_1 | \exists M\in \Pall(O_1,I_1,r_1\in R_1)\ s.t.\ M\subseteq I_s\}$.
\end{theorem}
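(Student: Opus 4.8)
The plan is to reduce the entire statement to a single reusable lemma about one \monotonic\ operator, and then apply that lemma twice: once to $O_1$ and once to $O_2$. The key lemma I would isolate is the following: for a \monotonic\ operator $\op$ with input $I$, output $R=\op(I)$, any subset $I_s\subseteq I$, and any record $r$, we have $r\in \op(I_s)$ if and only if there exists $M\in\Pall(\op,I,r)$ with $M\subseteq I_s$. This lemma is exactly the engine behind the theorem, because the definition of $O_s$ is precisely ``the set of $r_1$ for which some \miset\ (w.r.t.\ $O_1,I_1$) is contained in $I_s$,'' which the lemma identifies with $O_1(I_s)$.

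To prove the lemma, I would argue the two directions separately. For the ``if'' direction, suppose $M\in\Pall(\op,I,r)$ and $M\subseteq I_s$. By definition of an \miset, $r\in\op(M)$, and since $\op$ is \monotonic\ and $M\subseteq I_s$, monotonicity gives $\op(M)\subseteq\op(I_s)$, hence $r\in\op(I_s)$. For the ``only if'' direction, suppose $r\in\op(I_s)$. I would construct an \miset\ inside $I_s$ by greedy minimization (exactly as in Algorithm~\ref{algo:anyset}, but starting from $I_s$ rather than the full input): repeatedly discard any record whose removal still leaves $r$ in the output, until no further record can be removed. The resulting set $M\subseteq I_s$ satisfies $r\in\op(M)$ and $r\notin\op(M')$ for every $M'\subset M$, so $M$ is an \miset\ of $r$; crucially, since $M\subseteq I_s\subseteq I$, it is an \miset\ with respect to the \emph{full} input $I$, i.e.\ $M\in\Pall(\op,I,r)$.

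Given the lemma, the theorem follows in three short steps. First, applying the lemma to $O_1$ shows that $r_1\in O_s$ exactly when $r_1\in O_1(I_s)$, so $O_s=O_1(I_s)$. Second, by monotonicity of $O_1$ and $I_s\subseteq I_1$ we have $O_s=O_1(I_s)\subseteq O_1(I_1)=R_1$, so $O_s$ is a legitimate subset of the input $R_1$ to $O_2$. Third, applying the lemma to $O_2$ with input $R_1$, subset $O_s$, and record $r_2$ yields: $r_2\in O_2(O_s)$ if and only if $\exists M_2\in\Pall(O_2,R_1,r_2)$ with $M_2\subseteq O_s$. Since $(O_2\circ O_1)(I_s)=O_2(O_1(I_s))=O_2(O_s)$, this is precisely the claimed equivalence.

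The main obstacle is the subtle point in the ``only if'' direction of the lemma: I must verify that the minimal subset extracted from $I_s$ is an \miset\ relative to the \emph{original} input $I$ and not merely relative to the restricted ambient set $I_s$. This is where it matters that the \miset\ definition quantifies only over \emph{proper subsets} $M'\subset M$ of the candidate set itself --- these proper subsets are the same objects regardless of whether we regard the ambient input as $I_s$ or as the larger $I$ --- so minimality established within $I_s$ transfers verbatim to minimality within $I$. Everything else is a direct consequence of monotonicity, and no additional execution of $O_1$ or $O_2$ is required beyond what is already recorded in their all-provenance.
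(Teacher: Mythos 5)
Your proof is correct and takes essentially the same route as the paper's: both arguments hinge on the single key fact that for a monotonic operator, $r\in \op(I_s)$ if and only if some \miset\ of $r$ (with respect to the full input) is contained in $I_s$, so that $\Pall$ suffices to simulate the operator on any subset, and the theorem then follows by applying this once to $O_1$ (giving $O_s=O_1(I_s)$) and once to $O_2$. The only difference is one of rigor: the paper asserts this reconstruction property in a single line, whereas you prove both directions explicitly --- including the greedy minimization and the observation that minimality established inside $I_s$ transfers verbatim to the ambient input $I$ --- which tightens, but does not change, the paper's argument.
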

\noindent Given the above result, we know that $\Pall$ for $O_1$ and $O_2$ contain enough information to compute all $\P^*$ for $O_2\circ O_1$. However, using $\Pall$ can be expensive because of the number of possible \miset s. Hence, our next goal is to attempt to use other forms of provenance of $O_1$ and $O_2$, i.e., without enumerating $\Pall$. If some provenance of $O_{12}$ cannot be computed directly, we can fall back on the techniques from Section~\ref{subsec:provconstruction} to generate the provenance or use $\Pall$ using Theorem~\ref{thm:pall-cons}. 

Next we look at special cases of operators and determine when $P_*$ for $O_2\circ O_1$ can be computed efficiently. Table~\ref{tab:cons} summarizes our results. The table is {\em complete} in the sense that for any $P_*$ not present in the table, we must use $\Pall$ for $O_1$ and $O_2$ (using Theorem~\ref{thm:pall-cons}) to compute the entry, or resort to techniques in Section~\ref{subsec:provconstruction}. Given all possible combinations of operator properties is too many to list (16 combinations of arbitrary, \manyone, \oneone, and \onemany), Table~\ref{tab:cons} presents a delineating subset of results. All other combinations of results can be derived from the entries in the table. For instance, when $O_2$ is \oneone\ $\P_*^{12}$  can be computed for arbitrary $O_1$, hence other combinations involving $O_1$ aren't present in the table. Similarly, we only consider arbitrary $O_2$ when $O_1$ is arbitrary. Further, we do not consider \manyone\ separately as results for \manyone\ and arbitrary are similar. Also, results for \oneone, \onemany\ are similar as they both ensure a unique singleton \miset\ for each output record. We don't separately consider unique \miset s, as all of $\Pall$, $\Pany$, $\Punion$, $\Pint$ are the same, and computed easily. Finally, the table omits $\Pimp$ as our solution for $\Pimp$ is equivalent to that of using $\Pall$.

\begin{table}[t!]
\scriptsize
\begin{center}
\begin{tabular}{|c|c|} \hline

{\bf Properties} & {\bf $\P_*(O_2\circ O_1,)$} \\ \hline

{\bf $O_1$: \arbitrary} &  ${\Punion}^{12}(r_2) \subseteq \bigcup_{r_1\in {\Punion}^2(r_2)}  ({\Punion}^1(r_1))$   \\ 

{\bf $O_2$: \arbitrary} & ${\Pint}^{12}(r_2) \supseteq \bigcup_{r_1\in {\Pint}^2(r_2)}  ({\Pint}^1(r_1))$  \\ \hline

{\bf $O_1$: \oneone} &   ${\Pall}^{12}(r_2) = \{ \bigcup_{s1\in s2} {\Pany}^1(s_1) | s2\in {\Pall}^{2}(r_2)\}$ \\

{\bf $O_2$: \arbitrary} & $\P_*^{12}(r_2) = \bigcup_{s1\in \P_*^{2}(r_2)} {\Pany}^1(s_1)$$^\dagger$\\ \hline

{\bf $O_1$: \arbitrary}  & \multirow{2}{*}{$\P_*^{12}(r_2) = \P_*^{1}(\P_*^2(r_2))$$^\ddagger$} \\ 

 {\bf $O_2$: \oneone}  & \\ \hline

\end{tabular}
\vspace{-3mm}
\caption{\label{tab:cons}\scriptsize Problem~\ref{prob:composition} for combinations of properties for $O_1$ and $O_2$ operating on inputs $I_1$ and $R_1=O_1(I_1)$ respectively, and $R_2$ is the result of $O_2$. The table uses $\P_*^{12}(r_2)$, $\P_*^{2}(r_2)$, and $\P_*^{1}(r_1)$ as shorthands for $\P_*(O_2\circ O_1,I_1,r_2\in R_2)$, $\P_*(O_2,R_1,r_2\in R_2)$ and $\P_*(O_1,I_1,r_1\in R_1)$ respectively. $^\dagger$ * stands for one of {\em uni}, {\em int}, and {\em any}. $^\ddagger$ We have slightly abused notation to apply $\P_*^1$ to a singleton set, instead of the record in the set itself.}
\vspace{-4mm}
\end{center}
\end{table}

\subsection{Properties and Provenance Selection}

To summarize, given an IE execution our approach, \prober, allows users to specify the output records that they are interested in debugging. Either using information such as IO specifications or integrity constraints or using sampling, \prober\ attempts to identify the type of the operators (e.g., \oneone, or arbitrary). In the absence of any conclusive information, \prober\ assumes an arbitrary operator. For each operator or pipeline, users may choose the type of provenance they want based on editorial resources available. Note that users may always start with the conservative $\Pint$  or $\Pany$, and explore more complex provenance, such as $\Pall$ as needed, or ask for input to be ranked, such as $\Pimp$. We make two important observations regarding this user exploration: (1) Whenever operators satisfy restricted properties (such as \oneone), \prober\ readily computes all forms of provenance very efficiently. (2) For arbitrary monotonic operators, all our algorithms proceed in a ``pay-as-you-go fashion''; for instance, even if a user would like to perform an in-depth analysis of $\Pall$ leading to a potentially expensive computation, \prober\ starts returning \miset s immediately and progressively provides more information as available. Specifically, our $\Pany$ algorithm keeps iteratively finding new \miset s, which are returned to users as found.

\section{Experimental Evaluation}
\label{sec:experiments}
We now present results from our experimental evaluation. After describing our data sets (Section~\ref{sec:settings}), we present a qualitative study of \prober\ (Section~\ref{sec:quali}). Next, we evaluate the effectiveness of our basic unit for provenance, namely, \miset s (Section~\ref{sec:effect}). Then, we perform a detailed study of various provenance formalisms (e.g., $\Punion, \Pint, \Pany,$ etc.) by discussing basic statistics (Section~\ref{sec:stats}), and then, compare their coverage (Section~\ref{sec:coverage}) and execution times (Section~\ref{sec:time}).

\subsection{Experimental Settings}
\label{sec:settings}

\paragraph{Data sources:} We used a collection of 500 million web pages crawled by the Yahoo! search engine.

\paragraph{Extraction pipelines:} For our IE tasks, we implemented two pipelines. Our first pipeline, denoted, \grush, is as described in Section~\ref{sec:motivation}. For our second pipeline, denoted \wpr, we reimplemented a state-of-the-art bootstrapping exraction technique described by Pasca et al.~\cite{aaai06/pasca} for large-scale datasets such as Web corpora which is similar in spirit to other IE pipelines such as Snowball~\cite{dl00/agichtein} and Espresso~\cite{acl06/pantel}.

\paragraph{Extracted relations:} As extraction tasks, we focus on six relations (the last column shows the number of extracted tuples):

\vspace{-2mm}
\begin{table}[h]
\centering
\scriptsize
\begin{tabular}{ll ll}
{1} & {\bf \bussi}: & $\langle$name, address, phone$\rangle$ & 340,131\\
{2} &{\bf \actors}: &  $\langle$movie, actor$\rangle$ & 14,414\\
{3} &{\bf \books}: &   $\langle$book, author$\rangle$ & 142,337\\
{4} &{\bf \mayor}: &  $\langle$U.S. city, mayor$\rangle$ & 28514\\
{5} &{\bf \senparty}: &  $\langle$senator, affiliated party$\rangle$ & 2,119\\
{6} &{\bf \senstate}: &  $\langle$senator, state$\rangle$ & 14,582
\end{tabular}
\vspace{-2mm}
\end{table}

We built \bussi\ using \grush\ using a corpus of 5,443,183 web pages from 147 sites (see Section~\ref{sec:motivation}); all the other relations were built using \wpr. Our qualitative analysis presented next is using the \bussi\ dataset. The empirical analysis that follows was performed on each of \actors, \books, \mayor, \senparty, and \senstate. For most of our experiments we show results for the high-confidence tuples from our datasets, as these results are the most interesting: High-confidence tuples have most number of contributing input records and are therefore the hardest for provenance and debugging. 

\subsection{Qualitative analysis of \prober's utility}
\label{sec:quali}
To gain insights into the utility of \prober, we performed a qualitative analysis of the records generated for \bussi. Among the final set of output records, 38\% were missing business names, 40\% were missing phone numbers, 37\% were missing addresses. To give a flavour of user interaction with \prober, we qualitatively depict a debugging analysis for a specific erroneous record. In particular, we explore a record, $r$, $\langle$`AUSTIN, TX', `Burnett St, Austin, Texas 78703', null$\rangle$ which has incorrect values for business name and a missing value for phone number. Through the source web page associated with this record, we found that our first operator, namely \seg, had incorrectly segmented the page. As shown in Figure~\ref{fig:ex2}, \seg\ generated an incorrect segmentation for the second and third business contacts listed on the page. By fixing this segmentation, we debug and correct record $r$ as well as other records extracted from this page.

\begin{figure}[t]
\centering
\includegraphics[height=2in]{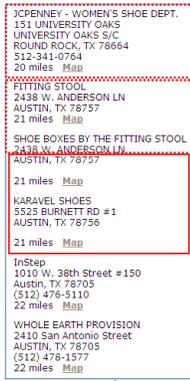}
\vspace{-5mm}
\caption{Incorrect segmentation causing incorrect output.}
\vspace{-3mm}
\label{fig:ex2}
\end{figure}

\subsection{Is \miset\ an effective representation?}
\label{sec:effect}

Earlier in Section~\ref{sec:prov}, we proposed \miset s as the primary representation to collect information related to an output record for debugging purposes, and provided concrete theoretical justification for our choice. Of course, other representations are also possible in practice. Next, we present an experimental comparison of \miset s against three strong baselines that could be used to collect tracing information for an output record. Specifically, we compare the following methods for generating tracing information.
\vspace{-1mm}
\begin{itemize}\itemsep-0.03in
\item \alldocs: The naive baseline of repeatedly exploring all input records for every output record. 
\item \wordor: Using the bag of words in an output record, we build a keyword query to fetch all input documents containing at least one term using a standard IR-like search interface.
\item \wordand: Similar to \wordor, except we only fetch documents containing {\em all} terms in the output record.
\end{itemize}
\vspace{-1mm}

\noindent An important note about the \wordor\ and \wordand\  baselines is that they exploit specific information about the extraction operators, namely, that input records aren't ``mangled'', i.e., terms are preserved by the extraction. \miset s, on the other hand, use no such information. Since in our extraction scenario, we chose operators that do indeed preserve terms in records, our comparison is unfair in that it favors \wordor\ and \wordand. Our goal was to compare \miset s with the {\em best} possible scenario for our baselines. (Clearly, in a fair comparison including operators that generate new terms or alter terms in input records, \wordor\ and \wordand\ won't even be applicable, and \alldocs\ would be the only feasible baseline.)

Figure~\ref{fig:nr-docs} presents our results comparing each method (\miset s, \wordand, \wordor, \alldocs) by examining the total number of input records that need to be fetched in order to generate the tracing information, varying the number of output records. By design, \wordand\ retrieves the fewest possible documents and \miset s completely coincides with \wordand. Indeed, this ``experimentally proves'' our claim from Section~\ref{sec:prov} that \miset s retrieve minimal sets of records from the input. Note that even in our favorable setting for keyword-based retrieval, \wordor\ retrieves many more input records\footnote{Note the log-scale on the y-axis}, and \alldocs\ is even more prohibitively expensive.

\subsection{Size of provenance formalisms}
\label{sec:stats}

\begin{figure*}[t]
\centering
\vspace{-5mm}
\begin{tabular}{c c c} 
\subfigure{\includegraphics[width=0.6\columnwidth]{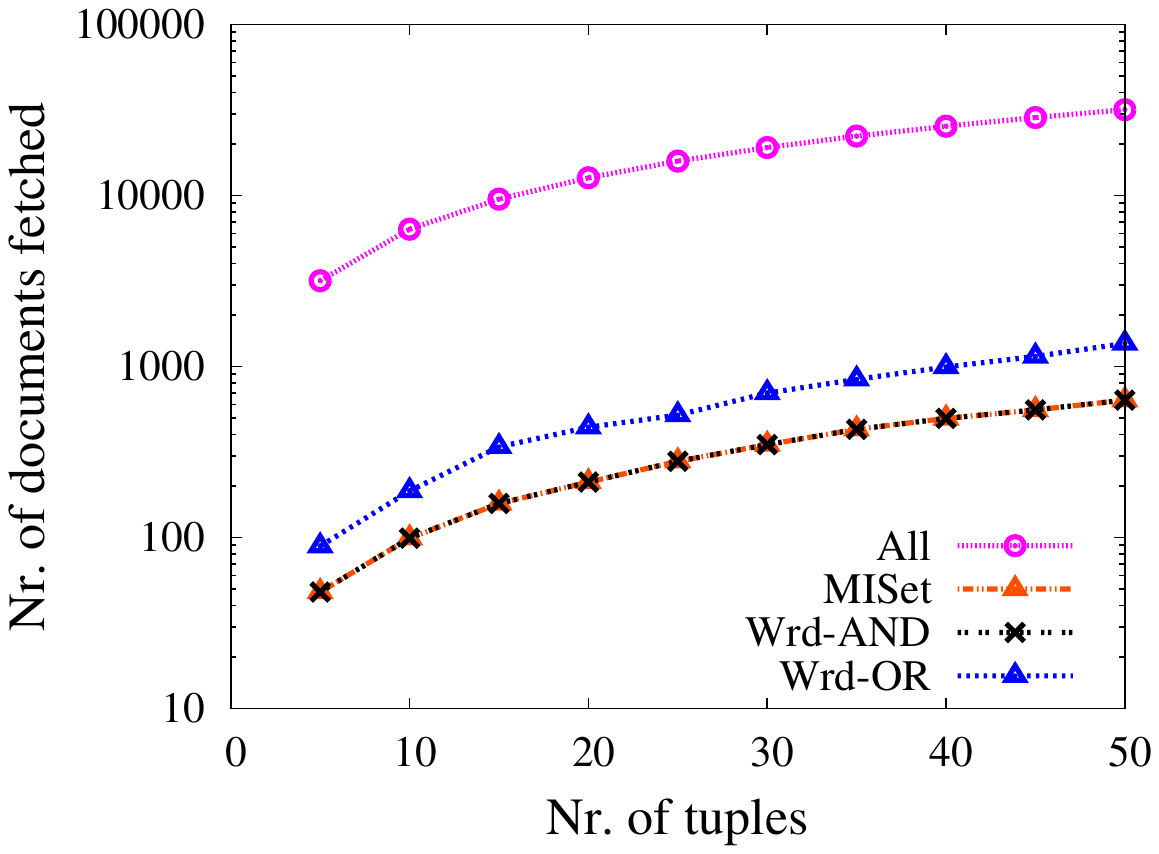}\label{fig:nr-docs}} &
\subfigure{\includegraphics[width=0.6\columnwidth]{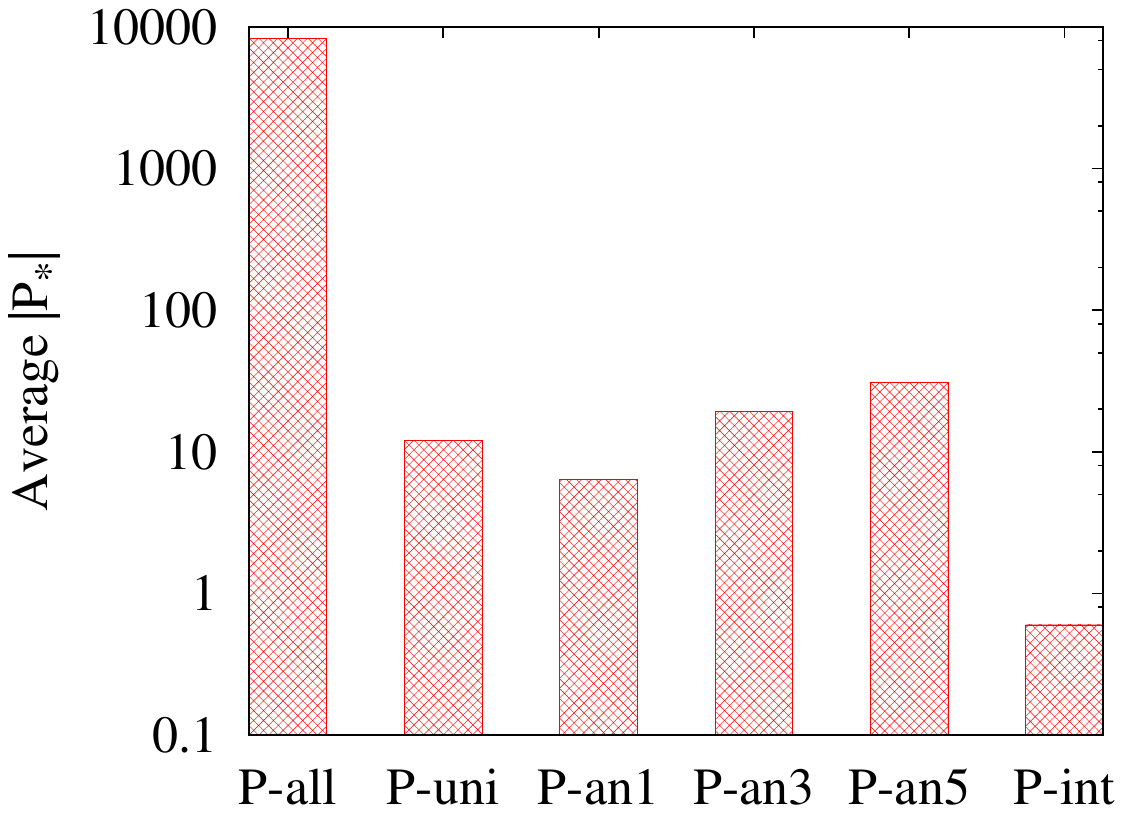}\label{fig:avg-sizes}} & 
\subfigure{\includegraphics[width=0.6\columnwidth]{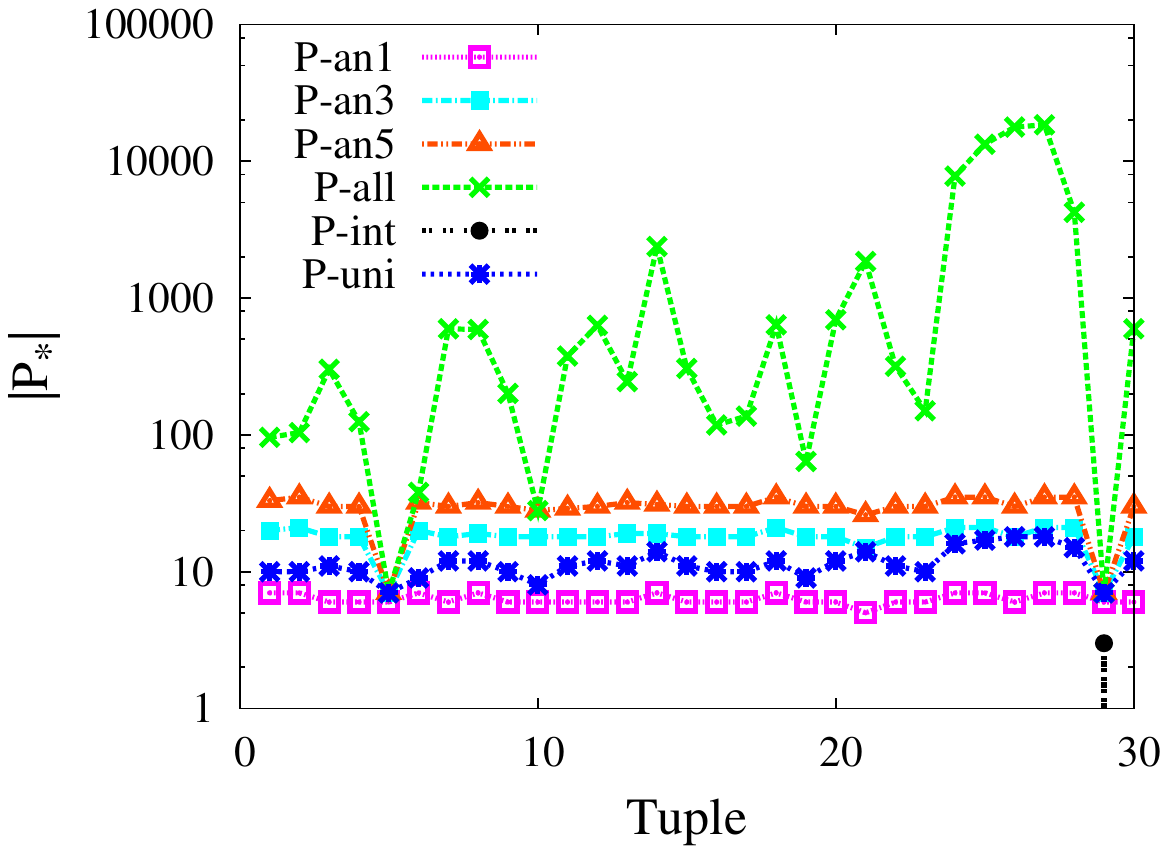}\label{fig:sizes}} \\[-2mm]
{\bf (a)} & {\bf (b) } & {\bf (c)} \\[-2mm]
\subfigure{\includegraphics[width=0.6\columnwidth]{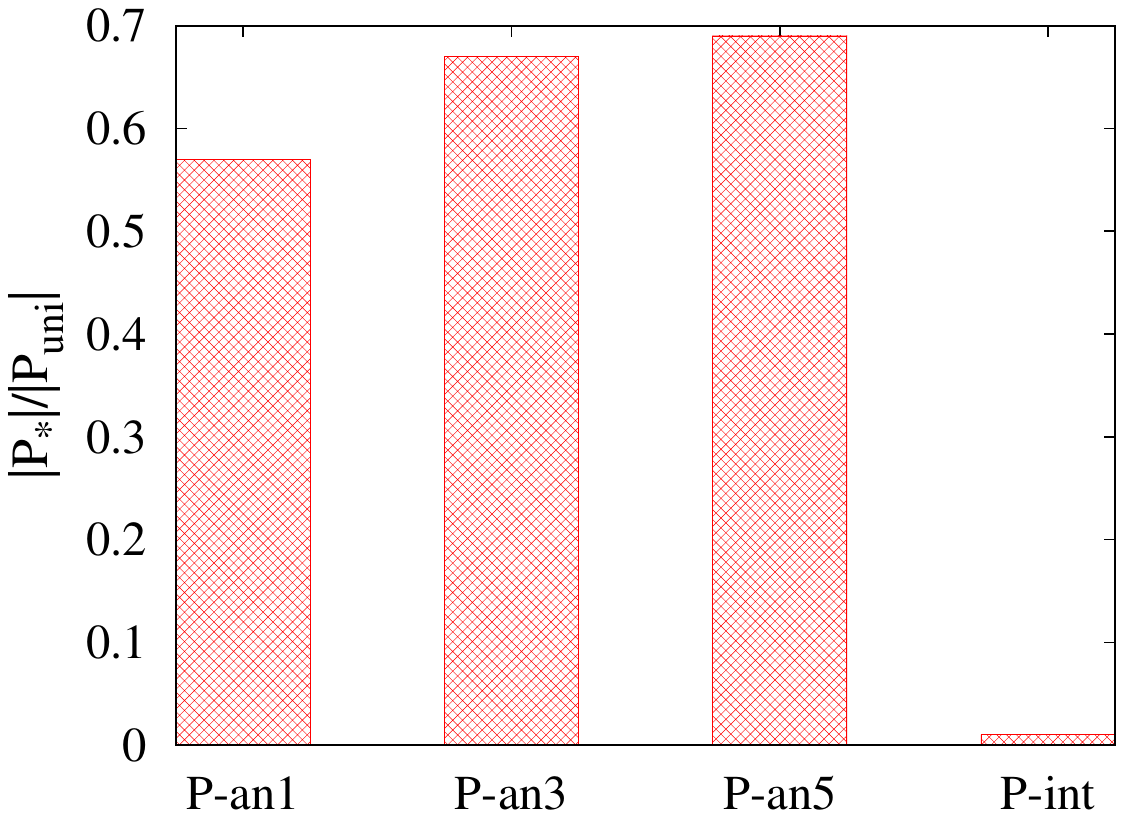}\label{fig:coverage}} & 
\subfigure{\includegraphics[width=0.6\columnwidth]{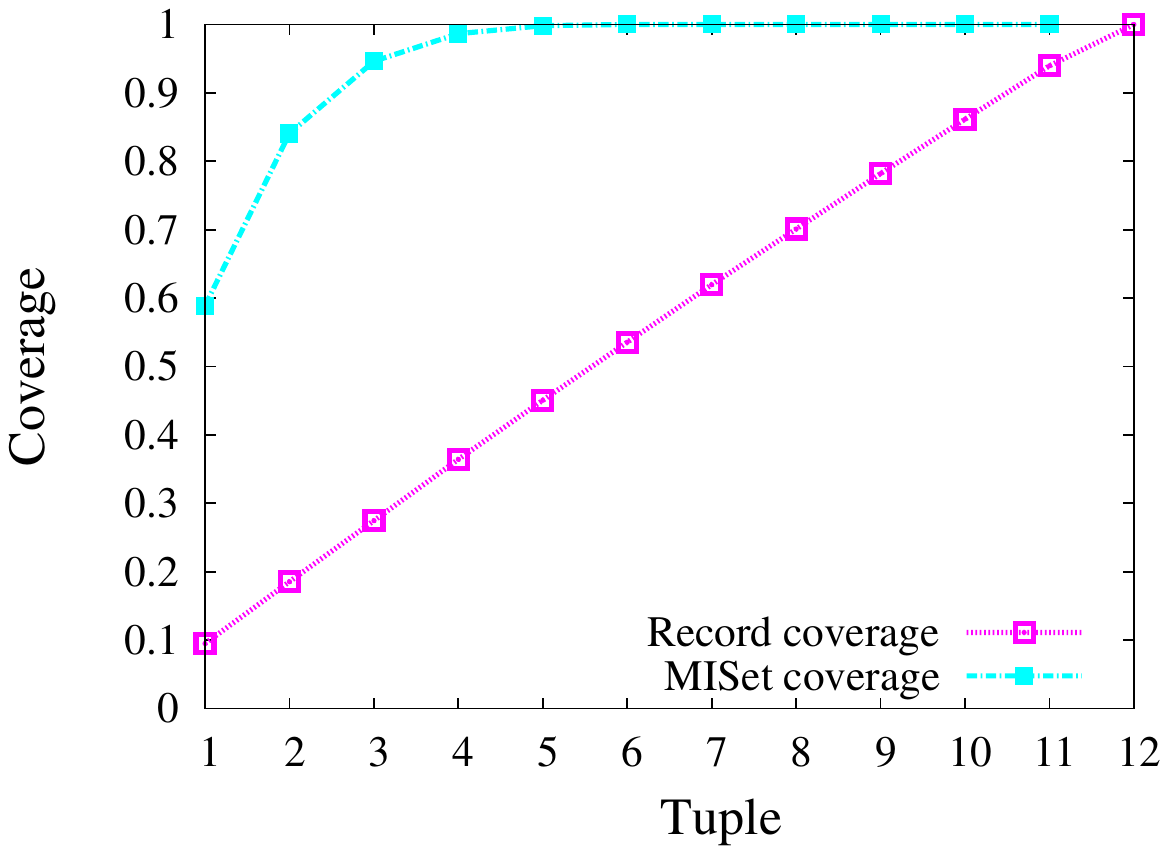}\label{fig:pimp-coverage}} & 
\subfigure{\includegraphics[width=0.6\columnwidth]{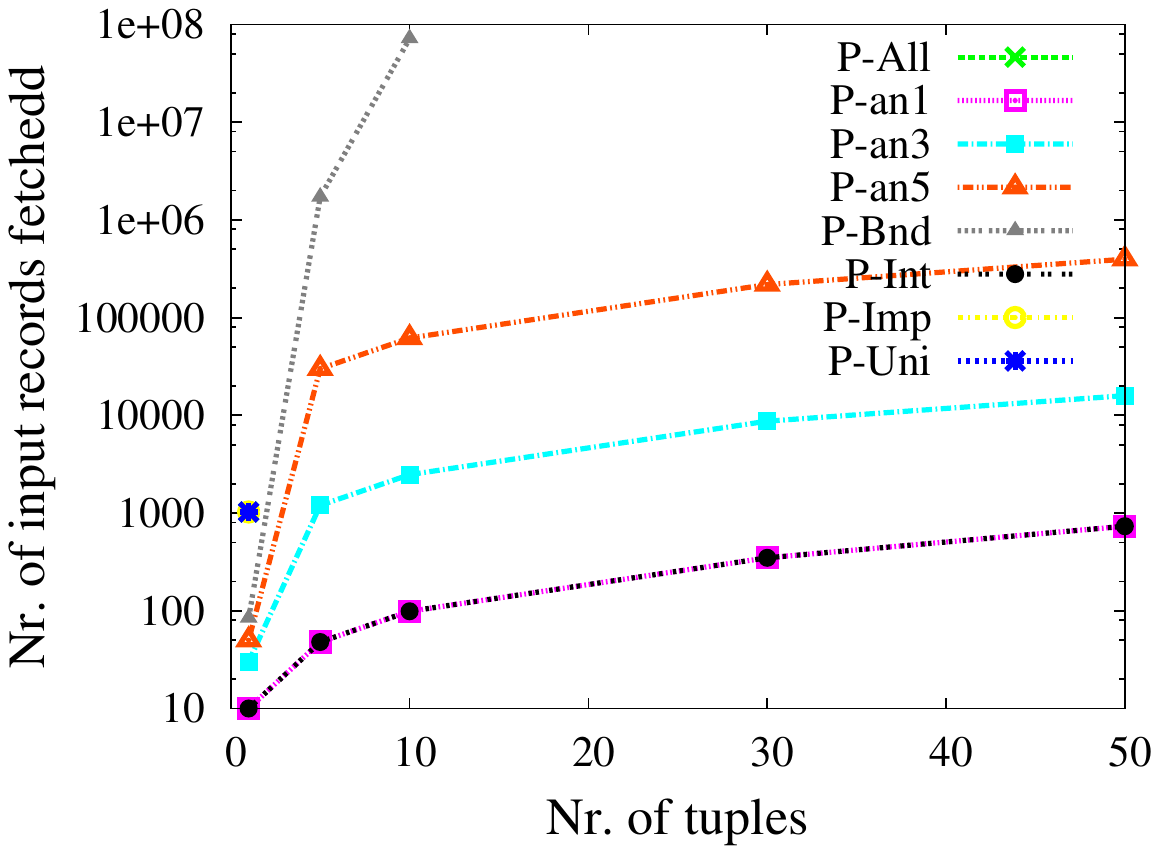}\label{fig:nr-execs}} \\[-2mm]
{\bf (d)} & {\bf (e) } & {\bf (f)} \\
\end{tabular}
\vspace{-2mm}
\caption{\scriptsize \bf (a) Number of documents fetched representing the amount of work necessary when using different debugging paradigms. (b) Average size of various provenances over 50 tuples. (c) Size of provenance generated for top-30 tuples ranked by confidence scores. (d) Coverage of different provenances with respect to $\Punion$. (e) Coverage of $\Pimp$ with respect to total number of \miset s, and total contribution of all input records. (f) Data fetched to derive various provenance $P_*$ definitions.}
\label{fig:massive}
\vspace{-3mm}
\end{figure*}

Next we explore the size of provenance generated using each of our formalisms: $\Pall, \Punion, \Pint$, and $\Pany$ with $k=1,3,5$. ($\Pimp$ isn't shown as the size of $\Pimp$ is naturally equal to the number of input records requested.) Figure~\ref{fig:avg-sizes} shows the average size of the provenance generated for each of the provenance formalisms over a set of 50 tuples ranked by their confidence scores. It is noteworthy that the sizes of the provenances, and in turn, the manual effort necessary can substantially vary across tuples. Since $\Pall$ maintains all possible \miset s, it is the largest. From the figure, we learn that a practical choice for users would be to start exploring $\Pint$ or $\Pany$-1, then request $\Pany$-$k$ for $k>1$ and $\Punion$ if necessary. 

To gain more insight into the distribution of sizes for individual tuples, Figure~\ref{fig:sizes} plots the size of each provenance type for the top-30 tuples. The size of $\Pall$ varies significantly but is almost always significantly more than all other provenance types. The two cases where $\Pall$ coincides with other provenance types  are examples of output records with unique \miset s. The minor variations in the sizes of all other forms of provenance are obscured by the log-scale for the y-axis.

\subsection{Coverage}
\label{sec:coverage}

Next we explore the  {\em coverage} of each provenance model measured as $\frac{|P_*|}{|\Punion|}$. Our goal is to determine what fraction of all potentially contributing input records is retrieved by any single \miset\ or any arbitrary 3 or 5 \miset s, as well as by $\Pint$. Figure~\ref{fig:coverage} shows the coverage averaged over the set of output records. $\Pint$ has very low coverage indicating that very few input records are {\em essential} in producing any output record; in other words, in most cases there are many different explanations for the same output record. $\Pany$ (for $k=1,2,3$), on the other hand, retrieves a sizable fraction of all contributing input records. This indicates that using $\Pany$ is a practical solution to start debugging, by retrieving the initial set of input records, and if necessary request more \miset s.

We treat the coverage study for $\Pimp$ as a special case. Since the coverage of $\Pimp$ depends on the number of ranked tuples retrieved, we measure the coverage of top-$k$ $\Pimp$ records $\{r_1, \ldots, r_k\}$ using two measures: (1) Record-coverage measured as the fraction of the total number of record appearances of these $k$ records in $\Pall$. That is $\frac{\sum_{i=1}^{k} c_i}{\sum_{i=1}^{l} c_i}$, where $c_i$ denotes the number of \miset s containing $r_i$, and $\Pall$ contains records $\{r_1, \ldots, r_l\}$. (2) \miset-coverage measuring the fraction of the total number of \miset s that contain some tuple in $\{r_1, \ldots, r_k\}$. Figure~\ref{fig:pimp-coverage} shows these coverages for $\Pimp$; we observe that $\Pimp$ is very effective in giving very high \miset-coverage with very few retrieved records, justifying that retrieving few tuples from $\Pimp$ can be very useful in debugging with a high representation of almost all \miset s. We get high incremental value for initial records, with diminishing returns as we retrieve more tuples. For record-coverage the trend is closer to a linear increase in coverage. Overall, we observe that $\Pimp$ (along with with $\Pany$ and $\Pint$) can be effective tools for debugging, with the caveat that $\Pimp$ is computationally more expensive (see Section~\ref{sec:time}). An interesting open question arising is that of efficiently (to the extent possible, given our \#P-complete result from Section~\ref{sec:algos}) retrieving just sufficient number of records to meet a coverage demand.

\subsection{Build time}
\label{sec:time}

Finally, we study the time required to build provenance in \prober, which directly depends on the amount of data fetched. Figure~\ref{fig:nr-execs} plots the number of input records fetched for each type of provenance, varying the number of high-confidence records. 
We note that $\Pall$, $\Punion$, and $\Pimp$ are the most expensive computationally, while the amount of data fetched for $\Pint$ and $\Pany$ for $k=1,3,5$ is significantly less. Since $\Pall$ requires a large amount of data to be fetched, we studied the behavior of our algorithm for finding all \miset s when the size of each \miset\ is bounded below 5 (Section~\ref{subsubsec:bdd}). We notice that this is more expensive than $\Pany$ and $\Pint$ but significantly faster than $\Pall$, and hence information on the size of each \miset\ can potentially be useful.

\subsection{Evaluation summary} In conclusion, we established the utility of \prober\ over a variety of relations. \miset s pick out minimal sets of input records in comparison to other baseline methods thus enabling rapid resolution of output records. Our provenance formalisms may substantially vary in their sizes and we discussed how users may gradually move from exploratory provenances to more complete ones. Finally, we studied the tradeoff between coverage and execution time for various provenance formalisms.

\section{Related Work and Conclusions}
\label{sec:relatedwork}

Here we present a very brief discussion of related work, with a more comprehensive description appearing in Appendix~\ref{app:relatedwork}. Some recent work~\cite{icde08/jain,nyu07/jain,joins/jain,goodk/jain,sigmod08/shen} has broadly looked at providing {\em exploration} phases that enable users to determine if a text database is appropriate for an IE task. However, users are provided with little or no insight into why unexpected results are produced, and how to debug them. Another interesting piece of work~\cite{vldb07/shen} presented techniques to build IE programs using Datalog for greater readability and easier debugging. Our recent work~\cite{sigmod10/sarma} considered debugging for iterative IE, and~\cite{nonanswer} looked at provenance for {\em non-answers} in results of extracted data. However, these papers assume complete knowledge of each operator in some form, such as access to code for each operator, or SQL queries applied to input data. Finally, there is a large body of work on provenance for relational data (refer~\cite{ikeda-survey,provenance}), and more recently~\cite{chap-jag10} on understanding provenance information. This work does not address the problem of building provenance for black-box operators to facilitate IE debugging with minimal editorial effort, the primary goal of our work.

In conclusion, we presented \prober, the first system for ad-hoc debugging of IE pipelines. At the core of \prober, is a suitable \miset-based provenance-model to link each output record with a {\em minimal} set of contributing input records. We provided efficient algorithms and complexity results for provenance inference, and an extensive experimental evaluation on several real-world data sets demonstrating the effectiveness of \prober. A few specific directions for future work arise, such as tighter bounds for $\Punion$ inference, and extending to non-monotonic operators. A more general direction of future work we are currently pursuing is to develop an interactive GUI for \prober\ and perform a user study by deploying it for multiple extraction frameworks at Yahoo!.

\bibliographystyle{abbrv}
{\scriptsize
\bibliography{Bibs/conferences,Bibs/names,Bibs/journals,Bibs/misc}
}

\pagebreak

\appendix

\begin{figure*}[t]
\centering
\includegraphics[width=2\columnwidth]{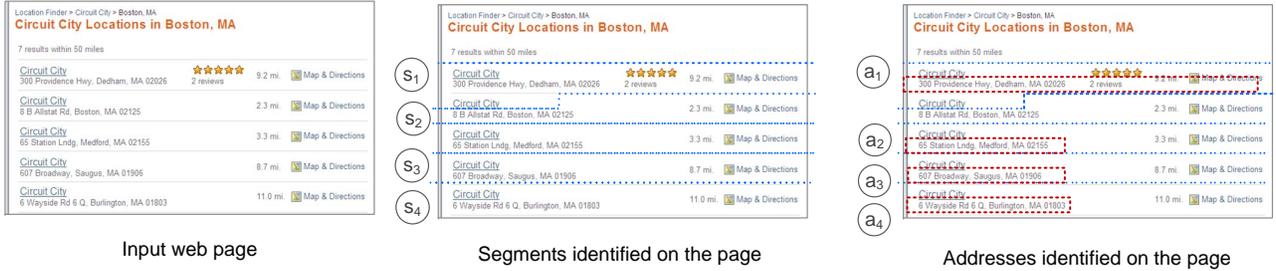}
\vspace{-5mm}
\caption{Sample input output for three steps, namely, \seg, \add, and \phn, in our extraction pipeline.}
\vspace{-3mm}
\label{fig:sample-data}

\end{figure*}

\section{Proofs}
\label{sec:proofs}

\paragraph{Proofs of Theorem~\ref{thm:umset}, Lemma~\ref{lem:utest}, and Lemma~\ref{lem:anyset}:}
To prove Lemma~\ref{lem:utest}, consider Algorithm~\ref{algo:unique}, which attempts to find non-uniqeueness of \miset s, starting from a given \miset\ $M$ obtained from Algorithm~\ref{algo:anyset}. Any other \miset\ $M'$ cannot be a superset of $M$ (else it wouldn't be minimal). Therefore, there must exists some $m\in M$ and $m\not \in M'$, which implies that $I-\{m\}\supseteq M'$. Therefore $O(I-\{m\})$ must contain $r$ for some $m\in M$ if there exists a \miset\ other than $M$.

The basic algorithm for Lemma~\ref{lem:anyset} is Algorithm~\ref{algo:anyset} which finds any \miset\ for a given input $I$. Finding $k$ \miset s is simply obtained by modifying input $I$ and calling Algorithm~\ref{algo:anyset} recursively: To find an \miset\ different from $M$, for each element $m\in M$, Algorithm~\ref{algo:anyset} is called with $I-\{m\}$. Similarly, given $p$ \miset s $M_1, \ldots, M_p$, to find a $(p+1)$th \miset\ $M'$, $M'$ must differ from each of $M_1, \ldots, M_p$. Hence, there must exist a $p$-tuple $(m_1,\ldots, m_p)$, $m_i\in M_i$, such that $M' \subseteq I-\{m_1, \ldots,m_p\}$. Our algorithm attempts to find an \miset\ for every such $p$-tuple. Finding the $(p+1)$th \miset\ needs to iterate over $|M_1|\cdot \ldots \cdot |M_p|$ $p$-tuples in the worst case, giving us the required complexity.

Theorem~\ref{thm:umset} follows based on checking whether $O$ gives a unique \miset\ (Lemma~\ref{lem:utest}), then using Lemma~\ref{lem:anyset} with $k=1$.\rbox

\paragraph{Proof of Theorem~\ref{thm:111nall}:}
We scan the input records $i\in I$, one at a time, and apply $\op$ to $\{i\}$ individually. Whenever we have $\op(\{i\})$, we return $\Pany(o)=\{i\}$, and add $\{i\}$ to $\Pall(o)$, and add the element $i$ to $\Punion(o)$, initialized to $\emptyset$. If $|\Punion(o)|\geq 2$, we set $\Pimp(o)=\emptyset$ , else set $\Pimp(o)=\Punion(o)$. Finally, $\Pint$ can be computed from $\Punion$.

It can be seen easily that provenance for \onemany\ operators can be computed in a similar fashion. The only difference is that the number of output records can now be larger than the number of input records, i.e., $M$ may be larger than $N$. Hence the complexity increases to ${\cal O}(M+N)$.\rbox

\paragraph{Proof of Theorem~\ref{thm:pint}:}
We perform ${\cal O}(N)$ calls to the operator, and for each call, we may have to look at an output of size ${\cal O}(M)$. Correctness of the algorithm follows easily: For any record $i$ to be in the intersection of all \miset s, removing $i$ from the input must remove the output record.\rbox

\paragraph{Proof of Theorem~\ref{thm:n1hard}:}
First we prove \#P-hardness for a \manyone\ operator (and hence for an \arbitrary\ operator), and then show that the problem is in $\#P$, which applies for \manyone\ and \arbitrary\ monotonic operators, completing our proof.

\begin{enumerate}
\item {\bf \#P-hardness} We give a reduction from the problem finding all minimal vertex covers. Given a graph $G(V,E)$, our goal is to compute all $V_{min}\subseteq V$ such that (1) $V_{min}$ is a {\em cover}: each edge $e\in E$ has an endpoint in $V_{min}$, (2) $V_{min}$ is {\em minimal}: No proper subset of $V_{min}$ is a cover. Given the input $G(V,E)$, we create an instance of finding $\Pall$ as follows: $I=V$, $O=\{1\}$, our goal is to find all \miset s of $1$. $\op$ takes as input any subset $I_s\subseteq I$, and returns $\{1\}$ if the corresponding set of vertices $V_s$ is a (not necessarily minimal) cover of $E$ in $G$, and returns $\{0\}$ otherwise. Note that each minimal vertex cover of $G$ corresponds to a \miset\ of $1$, and each \miset\ gives a minimal vertex cover. Finally, note that our operator generates a single output record, and is therefore \manyone.

\item {\bf \#P} Given any $I_s\subseteq I$, we can check in PTIME whether $I_s$ is an \miset: $I_s$ is an \miset\ if and only if no subset of it obtained by removing a single element returns $\{1\}$, and $1\in Op(I_s)$. Therefore, we can check for all sets in any $\Pall$, whether each of them is an \miset.\rbox
\end{enumerate}

\paragraph{Proof of Corollary~\ref{cor:n1hard}:}
Note that the hardness result of Theorem~\ref{thm:n1hard} holds even if our goal was to only count the number of minimal vertex covers, or equivalently, find the number of \miset s. We can translate the problem of counting the number of \miset s to computing $\Pimp$ for a special input tuple $i^*\in I$. Given an input $(Op,I,o\in O)$ to $\Pall$, we create $(Op',I',o\in O)$, where $I' = I\union \{i^*\}$ and for any $I_s\subseteq I'$ we have $Op'(I_s)=Op(I_s)$ if and only if $i^*\in I_s$ and $Op'(I_s)=\emptyset$ if $i^*\not\in I_s$. Counting the number of \miset s for $\op$ now reduces to the problem of determining $\Pimp$ for $i^*$.\rbox

\eat{
\paragraph{Proofs of Theorem~\ref{thm:arbhard} and Corollary~\ref{cor:arbhard}:}
\#P-hardness for Theorem~\ref{thm:arbhard} follows from Theorem~\ref{thm:n1hard}, and the proof of \#P-completeness is similar. Proof of Corollary~\ref{cor:arbhard} follows from Theorem~\ref{thm:arbhard}, as in the proof of Corollary~\ref{cor:n1hard}.\rbox
}

\paragraph{Proof of Theorem~\ref{thm:bounded}:}
The theorem follows directly based on an explicit search over all possible inputs of size of at most $B$ to find $\Pall$. All other $\P_*$ are subsequently computed using $\Pall$.\rbox

\paragraph{Proof of Theorem~\ref{thm:pall-cons}:}
The main idea used in the result is that the property of \miset s for monotonic operators ensures that $\forall r: \Pall(O,I,r\in R)$ for any operator is sufficient to reconstruct (and execute) $O$ for any subset $I_s\subseteq I$: Using monotonicity, we know that $O(I_s)\subseteq O(I)$, hence we only need to determine for every $r\in R$, whether $r\in O(I_s)$. Using the property of \miset s, we have $r\in O(I_s)$ if and only if there is a \miset\ of $r$, say $M_r\subseteq I_s$, allowing us to exactly construct $O(I_s)$. 

Given the above fact that $\Pall$ enables reconstructing any operator, the two expressions in the theorem merely simulate the execution of each operator: For a record $r_2$ to be in the output of $(O_2\circ O_1)$, some \miset\ $M_2$ of $r_2$ for $O_2$ must be contained in the output of $O_1$. Such an \miset\ $M_2$ is in the output of $O_1$, i.e., $M_2\subset O_s$.\rbox

\section{Expanded Related Work}
\label{app:relatedwork}

Recognizing the need for a principled approach to assisting IE developers and users, several methods have been proposed to enable {\em exploration} phases. Shen et al.~\cite{sigmod08/shen} presented an iterative approach to developing IE systems, where users begin with an ``approximate'' extraction query. Based on the results of this query, users may refine the follow-up query. Jain et al.~\cite{goodk/jain} presented a query model for IE tasks for the purpose of exploring whether a database is useful for the IE task or not. Following a similar spirit, optimization strategies that enable users to efficiently fetch IE results with pre-specified output quality (e.g., minimum number of good tuples and maximum tolerable bad tuples) have been proposed for single IE systems~\cite{icde08/jain,nyu07/jain} as well as multiple IE systems~\cite{joins/jain}. While such exploration phases enable IE developers to assess the quality of an IE system, they mostly focus on answering the question, ``{\em Is a text database $D$ a good choice for the IE system at hand?}'' Furthermore, users are provided with very little insights into why an IE system does not perform as expected. 

Assuming full access and control to the code for each operator in an IE pipeline, prior work~\cite{vldb07/shen} presented methods to build IE programs involving multiple operators using Datalog to generate programs that are easy to read and thus easier to debug. Our approach considers a generic IE pipeline that may involve any arbitrary operators for which we may not have exact specifications or access to the code. Recently,~\cite{chap-jag10} addressed the problem of understanding ``provenance black boxes''; the goal of their work is to provide users with an easier way to understand provenance information, allowing them to aggregate or drill down on provenance. In contrast, our goal is to build a provenance model that is suitable for black-box operators in an extraction pipeline. Note that we make no direct contribution on user understanding of provenance; rather, we produce {\em minimal sets} of provenance and their compositions in order to quickly understand errors in the data produced by the pipeline.

Our prior work~\cite{sigmod10/sarma} presented debugging algorithms for IE tasks; our current work substantially differs from and extends this work. The techniques in~\cite{sigmod10/sarma} focused on a simple form of IE system, namely, iterative IE methods~\cite{acl06/pasca}. Furthermore, we assumed we had complete knowledge on how each operator was designed and exact operator and input-output specifications . Moreover, the focus of~\cite{sigmod10/sarma} was to utilize the relatively simple provenance model to enable efficient algorithms for explanation, diagnosis, and repair. This paper developed a new provenance model for arbitrary extraction operators, and presented algorithms for building this provenance. Also relevant previous work on provenance is that of~\cite{nonanswer}, which addresses the problem of deriving the provenance (explanations) for {\em non-answers} in extracted data. The paper considers conjunctive queries, and for every potential tuple $t$ in an answer to a conjunctive query, the authors provide techniques for determining updates to base data that would produce $t$ in the output. Once again, our work relaxes these assumptions and enables debugging over complex IE pipelines consisting of arbitrary black-box operators. Finally, there is a large body of previous work on provenance for relational databases (refer to~\cite{provenance,ikeda-survey} for surveys); this work does not meet our two-fold requirements of provenance for black-box operators, and designing provenance to minimize editorial work during debugging.


\section{Extensions}
\label{app:extensions}

In this section, we very briefly discuss the extension of \prober s framework for absence of records from the output, which is particularly useful for non-monotonic operators. We emphasize that this section is primarily meant to indicate that \prober\ is amenable to these extensions. However, we are currently developing precise details, and our current system does not support these extensions. 

For debugging the absence of records from an output of any non-monotonic operator, we may analogously define a notion of {\em Maximal Superset} (\maset).

\begin{definition}[\maset]
Given an operator $O$, its input $I$ and output $R$, we say that  $I_s\subseteq I$ is a {\em Maximal Superset} (\maset) of $r\not \in R$ if and only if: (1)  $r\in Op(I_s)$; and (2) $\forall I': I'\supset I_s, I'\subseteq I \Rightarrow r\not\in Op(I')$.
\end{definition}
 
\noindent Just as in the case of \miset s, it's easy to see that \maset s are also not unique:

\begin{example}
In Example~\ref{ex:miset}, if the operator returned ``NO'' whenever there were fewer than 50 records in the input, then the \maset\ of ``NO'' is any set of 49 input records.
\end{example}

\eat{
\begin{theorem}
The union of all \miset s is equal to the intersection of all \maset s.
\end{theorem}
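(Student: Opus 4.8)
The plan is to recast both families through the single ``producing family'' of $O$ at the relevant record $r$, namely $\mathcal{F} = \{\, S \subseteq I : r \in O(S)\,\}$. Under this lens an \miset\ is exactly an inclusion-minimal member of $\mathcal{F}$ and a \maset\ is an inclusion-maximal member of $\mathcal{F}$; the side conditions $r \in R$ and $r \notin R$ appearing in the two definitions only fix which boundary of $\mathcal{F}$ is interesting for debugging and do not change the underlying combinatorial object. The target equality then becomes a statement purely about the poset $(\mathcal{F}, \subseteq)$: the union of its minimal elements equals the intersection of its maximal elements. I would prove it by double inclusion argued element-by-element over $x \in I$, and the engine in both directions will be a complementation (blocker/transversal) duality, \emph{not} any direct pairing of individual MISets with individual MASets.

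The bridge is the complemented family $\bar{\mathcal{F}} = \{\, I \setminus S : S \in \mathcal{F}\,\}$. The map $S \mapsto I \setminus S$ sends inclusion-maximal members of $\mathcal{F}$ bijectively to inclusion-minimal members of $\bar{\mathcal{F}}$, so by De Morgan $\bigcap_{A \in \maset} A = I \setminus \bigcup\,(\text{minimal members of }\bar{\mathcal{F}})$. The first, easy inclusion is therefore that $x$ lies in every \maset\ iff $x$ lies in no minimal member of $\bar{\mathcal{F}}$. With this in hand the whole theorem reduces to the crux identity that $\bigcup(\text{minimal members of }\mathcal{F})$ and $\bigcup(\text{minimal members of }\bar{\mathcal{F}})$ are \emph{complementary} in $I$, i.e. that they partition the ambient universe. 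To push $x$ across the complement I would, given a witness \miset\ $M \ni x$, greedily enlarge $I \setminus (M \setminus \{x\})$ while staying in $\mathcal{F}$, aiming to land on a \maset\ that is forced to retain $x$, and conversely start from a \maset\ and shrink to a \miset\ that keeps $x$, using minimality/maximality at each step.

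The main obstacle is precisely this complementarity step. There is in general no bijection between the set of all MISets and the set of all MASets — they can differ in cardinality — so the reconciliation of the minimal elements of $\bar{\mathcal{F}}$ with the MISets of $\mathcal{F}$ cannot be a naive one-to-one matching and must route entirely through the blocker identity $\bigcap \maset = I \setminus \bigcup(\text{minimal elements of }\bar{\mathcal{F}})$. This is exactly where care is needed: the two extremal unions meet cleanly only after the ambient universe is normalized to the records that actually participate in $\mathcal{F}$ (those appearing in $\Punion(r)$ together with those excluded from every \maset), so a preliminary step restricting $I$ to this relevant support, and checking that irrelevant records sit outside both sides, is what makes the partition go through. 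I expect the monotonic special case — where $\mathcal{F}$ is upward closed and the unique \maset\ is $I$ itself — to serve as the sanity check fixing the correct normalization, and I would treat establishing the support restriction rigorously as the genuinely load-bearing part of the argument.
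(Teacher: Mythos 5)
Your reduction through the producing family $\mathcal{F}=\{S\subseteq I : r\in \op(S)\}$ and the complemented family $\bar{\mathcal{F}}$ is sound as far as it goes, but it cannot be completed, because the statement itself is false in general --- and the paper tacitly concedes this: the ``theorem'' never appears in the compiled paper at all. It sits inside the paper's \texttt{\textbackslash eat} comment-out macro, and the entire ``proof'' there is the placeholder \emph{``Is this true in general? Prove.''} So there is no proof to compare against; the authors left it as an open question, and the answer is negative. Concretely, let $I=\{a,b\}$ and let $\op$ be the non-monotonic operator with $r\in \op(\{a\})$, $r\in \op(\{b\})$, $r\notin \op(\emptyset)$, $r\notin \op(\{a,b\})$ --- e.g., a validator that emits $r$ exactly when one of two mutually contradictory pieces of evidence is present; non-monotonic operators are precisely the setting in which \maset s are introduced (Appendix~\ref{app:extensions}). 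Here $\mathcal{F}=\{\{a\},\{b\}\}$, so $\{a\}$ and $\{b\}$ are each simultaneously inclusion-minimal and inclusion-maximal: the union of all \miset s is $\{a,b\}$, while the intersection of all \maset s is $\emptyset$. Your crux identity --- that $\bigcup(\text{minimal members of }\mathcal{F})$ and $\bigcup(\text{minimal members of }\bar{\mathcal{F}})$ partition $I$ --- fails maximally here, since both unions equal $\{a,b\}$; and your proposed support normalization excludes nothing, because every record of $I$ already occurs in some \miset. No greedy enlargement/shrinking argument can route around a false identity.

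Two further points. First, your own sanity check already refutes the statement: in the monotonic case the unique maximal member of $\mathcal{F}$ is $I$ itself, so the intersection of \maset s is $I$, whereas the union of \miset s is $\Punion(r)$, generically a proper subset of $I$. Your reaction --- shrink the ambient universe until the two sides meet --- does not prove the theorem; it replaces it with an essentially tautological variant, and the counterexample above shows that even this variant fails for non-monotonic operators. Second, under the paper's literal definitions the two notions never apply to the same record ($\miset$s are defined for $r\in R$, \maset s for $r\notin R$), so the claimed equality is ill-posed as stated; your recasting of both as extremal members of $\mathcal{F}$ is the natural repair, but under that repair the claim is simply false. The correct resolution of this statement is a counterexample, not a proof.
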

\begin{proof}
Is this true in general? Prove.
\end{proof}
}

\noindent 
\miset s are useful for debugging based on the {\em presence} records in the output of an operator, while \maset s are useful for debugging the {\em absence} of records from the output: For every record that is output by an operator, its \maset\ is the entire input, and hence its \maset\ doesn't help in fixing an erroneous output record. However, the \miset\ of an erroneous record points to potential incorrect input that caused the error. Conversely, for a record that is absent in the output, \maset s help identify what caused the record to get omitted from the output.

\eat{
\begin{example}
Consider an operator whose input is a set of relational tuples of schema $R(A,B)$. Suppose the operator performs a group-by $A$ and outputs the distinct sums of $B$ values. Given an input set of records $\{(1,1),(1,5),(2,4),(2,2)\}$, the output is a single record $(6)$, and it has two \miset s: $\{(1,1),(1,5)\}$ and $\{(2,4),(2,2)\}$.
\end{example}
}

\eat{
\subsection{To think}

Topics that we can add to extend this work
\begin{itemize}
\item \miset\ for a set of records, and not an individual record.
\end{itemize}
}


\end{document}